\newtheorem{theorem}{Theorem}
\newcommand{\aLm}{\tfrac{\alpha\Lambda}{\mu}}
\newcommand{\bLm}{\tfrac{b\Lambda}{\mu}}
\title{
	A model for COVID-19 and bacterial pneumonia coinfection with community- and hospital-acquired infections
	}
\author{
	Angel G. C. P\'erez$^\text{1}$
	\and    
	David A. Oluyori$^\text{2}$
}
\date{}
\begin{document}

\maketitle

\begin{center}
	{\small
		$^\text{1}$
		Facultad de Matem\'aticas, Universidad Aut\'onoma de Yucat\'an, M\'erida, Yucat\'an, Mexico.
		Email address: \url{agcp26@hotmail.com}\\
		
		$^\text{2}$
		Department of Mathematics, School of Physical Science, Ahmadu Bello University, Zaria, Kaduna State, Nigeria.
		Email address: \url{oluyoridavid@gmail.com}
	}
\end{center}
\bigskip

\begin{abstract}
	We propose a new mathematical model to study the coinfection dynamics of COVID-19 and bacterial pneumonia. Our model includes two infection ways for pneumonia, corresponding to community-acquired and hospital-acquired infections. We show that the existence and local stability of equilibria depend on three different parameters, which are interpreted as the basic reproduction numbers of COVID-19, bacterial pneumonia, and bacterial population. Numerical simulations are performed to complement our theoretical analysis, and we show that both diseases can persist if the basic reproduction number of COVID-19 is greater than one.
\end{abstract}

\section{Introduction}

The Coronavirus Disease 2019 (COVID-19) has been a major public health concern across the nations of the world since its declaration as a global pandemic due to its rapid infectivity and  high death toll. Trend analysis reveals that one major cause of death due to Coronavirus has been secondary causes due to bacterial and viral infections, which lead to the eventual death. In the current realities from COVID-19, many studies have shown that Respiratory Tract Infections (RTIs) predispose patients to coinfections, which result in increased disease severity and death. RTIs are infections of the parts of the body involved in breathing, such as sinuses, throat, airways or lungs, caused by a variety of bacteria and virues such as Influenza (flu). Typical infections of the upper respiratory tract include tonsillitis, pharyngitis, sinusitis and certain types of influenza (such as H1N1). Symptoms of RTIs include cough, soar throat, running nose, nasal congestion, headache, low-grade fever, facial pressure and sneezing.

As reported in \cite{morens2008predominant}, most of the fatalities in the 1918 Influenza pandemic were due to subsequent bacterial infection, particularly with \textit{Streptococcus pneumoniae}. Data evidence from few studies has it that poor outcomes in the Influenza (H1N1) pandemic were associated with coinfections \cite{macintyre2018role}. So far, coinfections are increasingly recognised in respiratory tract infections such as MERS, SARS-CoV2, Influenza (H1N1) with the discovery of highly sensitive techniques for microorganism detection and identification (MALDI-TOF, Multiplex PCR). The study of coinfections in a pandemic situation such as COVID-19 has become imperative due to the clinical, diagnostic and therapeutic challenges it poses \cite{kumar2020covid}. To further buttress the aforestated, Lansbury et al. \cite{lansbury2020coinfections} highlighted some important aspects of bacterial and viral infections in COVID-19 and antimicrobial prescription.

Despite proven epidemiological significance of coinfections in the severity of respiratory diseases, they are largely understudied during a large outbreak of respiratory infections like SARS-CoV-2 \cite{cox2020coinfections}. Zhou et al. \cite{zhou2020clinical} showed that 50\% of the mortalities due to COVID-19 result from secondary bacterial infections. Chen et al. \cite{chen2020epidemiological} in this vein reported both bacterial and fungal infection. Clinical evidences shows that diagnosing coinfections is a complex process, because the organism itself might have been resident in the host before the viral infection as part of an underlying chronic infection or might have been contacted nosocomially \cite{cox2020coinfections}. Hence, early diagnosis of coinfections is required, preferably using a broad potential pathogens and antimicrobial resistances with subsequent monitoring for infection development. Therefore, to accurately diagnose and study coinfections in COVID-19, it is highly recommended that patients must be recruited on admission to intensive care units (ICU) and sampled longitudinally throughout the disease course using culture-independent techniques capable of identifying complex mixed infections without previous target selections such as whole-genome metagenomics to help identify the pathogens and make informed antibiotics prescription. As rapid extension of coinfection is necessary in the management and treatment of most severe COVID-19 cases, which could help save lives and improve antimicrobial stewardship. It has been reported that some patients presenting to the hospital with SARS-CoV-2 infection have a clinical phenotype that is not dissimilar from atypical bacterial pneumonia \cite{rawson2020bacterial}.

Recent studies have established clinical evidences of coinfections of SARS-CoV-2 (COVID-19) with other diseases such as tuberculosis \cite{kumar2020covid,orozco2020covid,tadolini2020active,yadav2020case,khurana2020significance,petrone2021coinfection,tadolini2020active}, influenza A (H1N1) \cite{lew2020coinfection,jing2021coinfection,fahim2021coinfection,xiang2021coinfection,ata2020year} and Middle East Respiratory Syndrome Coronavirus (MERS-CoV) \cite{elhazmi2021severe}, as well as bacterial coinfections \cite{giannella2022predictive}. Due to this, some authors have developed mathematical models to study the dynamics of COVID-19 and its coinfection with influenza \cite{soni2021covid}, malaria \cite{tchoumi2021malaria}, tuberculosis \cite{bandekar2022coinfection,rwezaura2022mathematical}, dengue \cite{omame2021covid} and diabetes \cite{omame2022diabetes}. However, no model has been proposed to study the coinfection dynamics of COVID-19 with bacterial pneumonia.

Bacterial pneumonia is an inflammation of the lungs caused by infection with certain bacteria. Depending on the location where a person acquires the infection, it can be classified as either \textit{community-acquired pneumonia} or \textit{hospital-acquired pneumonia}. Community-acquired pneumonia is by far the most common type \cite{depietro2019what}. On the other hand, hospital-acquired pneumonia is usually more severe because the infecting organisms tend to be more aggressive; they are also less likely to respond to antibiotics and are, therefore, harder to treat \cite{setih2020hospital}. Clinical studies have shown that critically ill COVID-19 patients admitted to the hospital suffer more frequent bacterial or fungal nosocomial infections, and patients with underlying risk factors such as advanced age, mechanical ventilation or prolonged hospital stay are more prone to these complications \cite{ansari2021potential,huang2020clinical,wang2020clinical}. Bacterial or fungal coinfections are unlikely to be common in patients with mild COVID-19 when compared with those with more severe disease upon admission to the hospital \cite{ansari2021potential}.

The present study is motivated by the need to mathematically study the dynamics of coinfection of COVID-19 with bacterial pneumonia, including the cases when bacterial infection is acquired in the community or in the hospital. This paper is structured as follows: in Section \ref{sec:description}, we introduce three models: a sub-model for COVID-19 infection, a sub-model for bacterial pneumonia, and coinfection model that includes the dynamics of both diseases. In Section \ref{sec:sub-models}, we determine some basic properties for the two sub-models. In Section \ref{sec:coinfection}, we provide an analysis for the coinfection model. In Section \ref{sec:num}, we perform some numerical simulations to illustrate the dynamics of the coinfection model. Finally, in Section \ref{sec:conclusions}, we provide some concluding remarks.

\section{Description of the models}
\label{sec:description}

\subsection{COVID-19 infection model}

The COVID-19 infection model subdivides the human population into four compartments: susceptible ($S$), infected but not hospitalised ($I$), hospitalised ($H$), and recovered ($R$). This model can be described by the following system of equations:
\begin{equation}\label{COVID-model}
\begin{aligned}
	S' & = \Lambda + \sigma R - \mu S - \alpha SI, \\
	I' & = \alpha SI - (\gamma + \eta + \mu)I,     \\
	H' & = \eta I - (\theta + \delta + \mu)H,      \\
	R' & = \gamma I + \theta H - \mu R - \sigma R.
\end{aligned}
\end{equation}

The interpretation of parameters is as follows:
\begin{itemize}
	\item $\Lambda$: recruitment rate of susceptible population.
	\item $\mu$: natural death rate.
	\item $\alpha$: transmission rate of COVID-19.
	\item $\gamma$: recovery rate of people infected with COVID-19 but not hospitalised.
	\item $\theta$: recovery rate of hospitalised people.
	\item $\eta$: hospitalisation rate.
	\item $\delta$: COVID-19-induced death rate of hospitalised people.
	\item $\sigma$: rate of loss of immunity against COVID-19 infection.
\end{itemize}

\subsection{Bacterial pneumonia infection model}

The model for bacterial pneumonia subdivides the human population into three compartments: susceptible ($S$), infected ($I$), and recovered ($R$). We also consider a compartment $B$ representing the population of bacteria in the environment. The model is given by the following system:
\begin{equation}\label{bacterial-model}
\begin{aligned}
	S' & = \Lambda - \mu S - bSI - b_1SB,               \\
	I' & = bSI + b_1SB - \phi I - \mu I - \delta I,    \\
	R' & = \phi I - \mu R,                              \\
	B' & = pI + rB\left(1-\frac{B}{\kappa}\right) - mB.
\end{aligned}
\end{equation}

The parameters of this model can be interpreted as follows:
\begin{itemize}
	\item $\Lambda$: recruitment rate of susceptible population.
	\item $\mu$: natural death rate.
	\item $b$: transmission rate of community-acquired bacterial pneumonia.
	\item $b_1$: transmission rate of hospital-acquired bacterial pneumonia.
	\item $\delta$: disease-induced death rate of infected population.
	\item $\phi$: recovery rate of people with bacterial infection.
	\item $p$: rate of excretion of bacteria in the environment by infected people.
	\item $r$: maximal per capita growth rate of bacteria in the environment.
	\item $\kappa$: carrying capacity of bacterial population.
	\item $m$: clearance rate of bacterial population.
\end{itemize}

\subsection{Coinfection model}

Based on models \eqref{COVID-model} and \eqref{bacterial-model}, we propose a combined COVID-19--bacterial pneumonia coinfection model. We will consider three stages for COVID-19 infection and four for bacterial infection, which gives twelve mutually exclusive compartments: bacterial pneumonia susceptible and COVID-19 susceptible ($X_{SS}$); bacterial pneumonia susceptible and COVID-19 mildly infected ($X_{SI}$); bacterial pneumonia susceptible and COVID-19 hospitalised ($X_{SH}$); bacterial pneumonia susceptible and COVID-19 recovered ($X_{SR}$); bacterial pneumonia infected and COVID-19 susceptible ($X_{IS}$); bacterial pneumonia infected and COVID-19 mildly infected ($X_{II}$); bacterial pneumonia infected and COVID-19 hospitalised ($X_{IH}$); bacterial pneumonia infected and COVID-19 recovered ($X_{IR}$); bacterial pneumonia recovered and COVID-19 susceptible ($X_{RS}$); bacterial pneumonia recovered and COVID-19 mildly infected ($X_{RI}$); bacterial pneumonia recovered and COVID-19 hospitalised ($X_{RH}$); and bacterial pneumonia recovered and COVID-19 recovered ($X_{RR}$). Additionally, we consider a compartment $B$ representing concentration of bacteria in the hospital environment. We make the following assumptions:
\begin{enumerate}
	\item COVID-19 is transmitted by contact with people in the $X_{SI}$, $X_{II}$ and $X_{RI}$ compartments.
	
	\item The population susceptible to COVID-19 are infected by this disease at a rate $\alpha_1$ if they have bacterial pneumonia, and at a rate $\alpha$ otherwise.
	
	\item The hospitalisation rate for people coinfected with COVID-19 and community-acquired pneumonia increases by an amount $\eta_1$ with respect to people with only COVID-19.
	
	\item The COVID-19 recovery rate for hospitalised people is $\theta_1$ if they are coinfected, and $\theta$ otherwise.
	
	\item Non-hospitalised people get community-acquired pneumonia by contact with people in the $X_{IS}$, $X_{II}$ and $X_{IR}$ compartments.
	
	\item Non-hospitalised people are infected with pneumonia at a rate $b_1$ if they have COVID-19, and at a rate $b$ otherwise.
	
	\item People hospitalised due to COVID-19 get hospital-acquired pneumonia at a rate proportional to the concentration of bacteria in the environment.
	
	\item The disease-induced death rate for coinfected hospitalised patients is increased by an amount $\delta_2$ with respect to those with only COVID-19.
	
	\item The pneumonia-induced death rate for non-hospitalised people is $\delta_0$ if they have COVID-19, and $\delta$ otherwise.
	
	\item The pneumonia recovery rate is $\phi_1$ for people in the $X_{II}$ compartment, $\phi_2$ for the $X_{IH}$ compartment, and $\phi$ for the $X_{IS}$ and $X_{IR}$ compartments.
\end{enumerate}

\begin{figure}
	\centering
	\includegraphics[width=\linewidth]{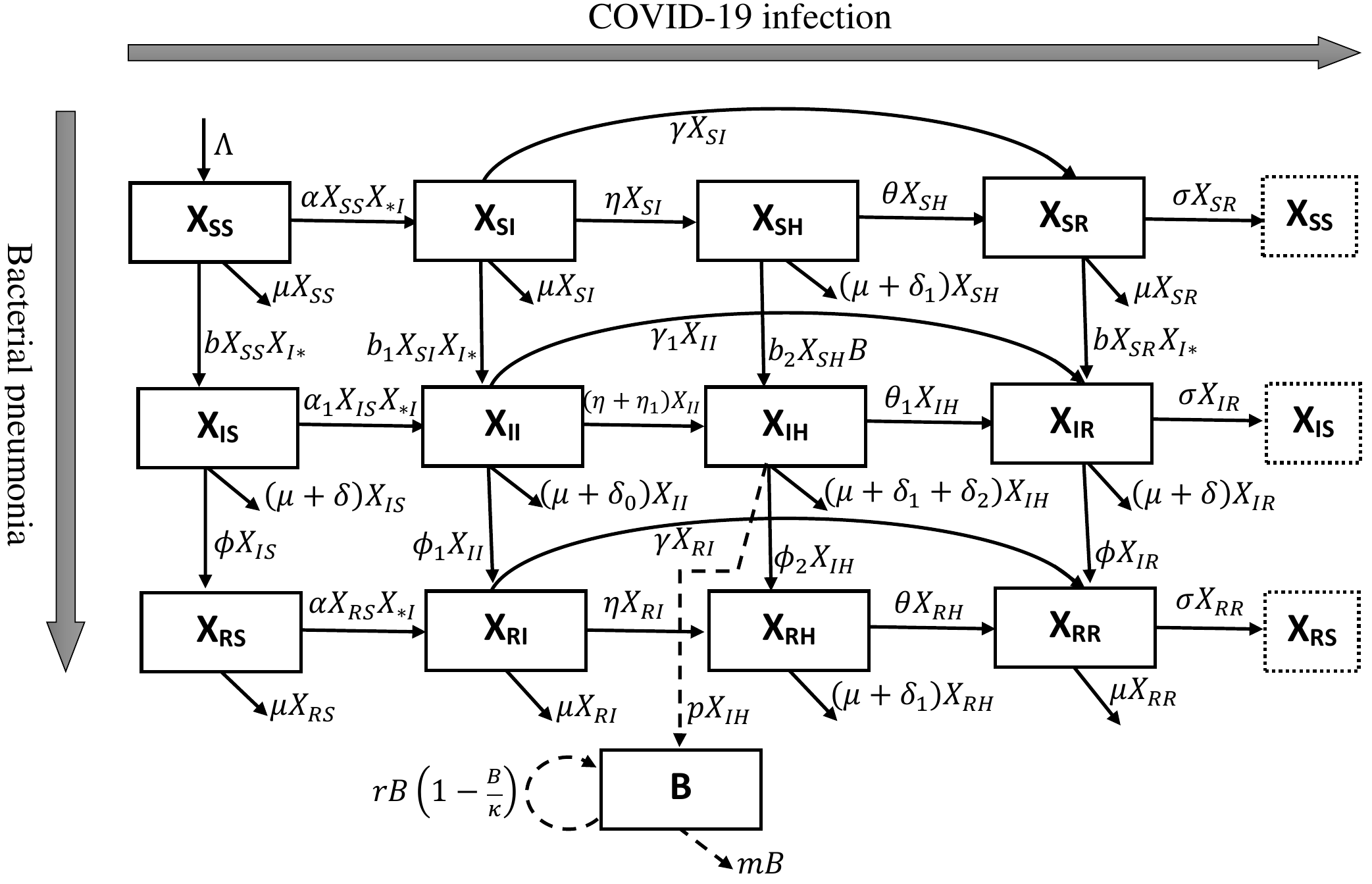}
	\caption{Schematic diagram of the coinfection model. Solid lines represent transition between compartments. Dashed lines represent proliferation of bacteria. $X_{*I}$ denotes $X_{SI}+X_{II}+X_{RI}$ and $X_{I*}$ denotes $X_{IS}+X_{II}+X_{IR}$.}
	\label{fig:drawing}
\end{figure}

The above assumptions yield a coinfection model given by the following system of 13 differential equations:

\begin{equation}\label{syst}
	\begin{aligned}
		X_{SS}' & = \Lambda + \sigma X_{SR} - \mu X_{SS} - \alpha X_{SS}\left(X_{SI}+X_{II}+X_{RI}\right) - bX_{SS}\left(X_{IS}+X_{II}+X_{IR}\right),                                     \\
		X_{SI}' & = \alpha X_{SS}\left(X_{SI}+X_{II}+X_{RI}\right) - (\gamma + \eta + \mu)X_{SI} - b_1X_{SI}\left(X_{IS}+X_{II}+X_{IR}\right),                                            \\
		X_{SH}' & = \eta X_{SI} - \theta X_{SH} - (\mu + \delta_1)X_{SH} - b_2X_{SH}B,                                                                                                    \\
		X_{SR}' & = \gamma X_{SI} + \theta X_{SH} - \mu X_{SR} - \sigma X_{SR} - b X_{SR}\left(X_{IS}+X_{II}+X_{IR}\right),                                                               \\
		X_{IS}' & = \sigma X_{IR} + bX_{SS}\left(X_{IS}+X_{II}+X_{IR}\right) - \alpha_1X_{IS}\left(X_{SI}+X_{II}+X_{RI}\right) - (\mu+\delta)X_{IS} - \phi X_{IS},                        \\
		X_{II}' & = b_1X_{SI}\left(X_{IS}+X_{II}+X_{IR}\right) + \alpha_1X_{IS}\left(X_{SI}+X_{II}+X_{RI}\right) - \left(\gamma_1 + \eta + \eta_1 + \mu + \delta_0 + \phi_1\right)X_{II}, \\
		X_{IH}' & = \left(\eta+\eta_1\right)X_{II} + b_2X_{SH}B - \theta_1X_{IH} - (\mu + \delta_1 + \delta_2)X_{IH} - \phi_2X_{IH},                                                      \\
		X_{IR}' & = b X_{SR}\left(X_{IS}+X_{II}+X_{IR}\right) + \gamma_1X_{II} + \theta_1X_{IH} - (\mu+\delta)X_{IR} - \phi X_{IR} - \sigma X_{IR},                                       \\
		X_{RS}' & = \sigma X_{RR} + \phi X_{IS} - \mu X_{RS} - \alpha X_{RS}\left(X_{SI}+X_{II}+X_{RI}\right),                                                                            \\
		X_{RI}' & = \phi_1 X_{II} + \alpha X_{RS}\left(X_{SI}+X_{II}+X_{RI}\right) - (\gamma + \eta + \mu)X_{RI},                                                                         \\
		X_{RH}' & = \eta X_{RI} + \phi_2 X_{IH} - \theta X_{RH} - (\mu + \delta_1)X_{RH},                                                                                                 \\
		X_{RR}' & = \phi X_{IR} + \gamma X_{RI} + \theta X_{RH} - \mu X_{RR} - \sigma X_{RR},                                                                                             \\
		B'      & = pX_{IH} + rB\left(1-\frac{B}{\kappa}\right) - mB.
	\end{aligned}
\end{equation}

The schematic diagram of model \eqref{syst} can be seen in Figure \ref{fig:drawing}. All parameters are assumed to be positive.

\section{Analysis of sub-models}
\label{sec:sub-models}

Before studying the dynamics of the coinfection model \eqref{syst}, we will analyse the two sub-models (COVID-19 only and bacterial pneumonia only).

\subsection{Analysis of the COVID-19 infection model}

The COVID-19-only model \eqref{COVID-model} has a disease-free equilibrium (DFE) given by
\[\mathcal{E}_{C0} = (S,I,H,R) = \left(\frac{\Lambda}{\mu},0,0,0\right).\]

The stability of $\mathcal{E}_{C0}$ depends on the basic reproduction number of model \eqref{COVID-model}. Using the notation in \cite{driessche2002reproduction}, we define the matrices $F$ and $V$ given by
\[F =
\begin{bmatrix}
	\frac{\alpha\Lambda}{\mu} & 0 \\
	0                         & 0
\end{bmatrix},\qquad
V =
\begin{bmatrix}
	\gamma+\eta+\mu & 0                 \\
	-\eta           & \theta+\delta+\mu
\end{bmatrix}.\]

Then, the basic reproduction number $\mathcal{R}_C$ of the COVID-19-only model is given by the spectral radius of $FV^{-1}$. From this, we obtain
\begin{equation}
\mathcal{R}_C = \frac{\alpha\Lambda}{\mu(\gamma+\eta+\mu)}.
\end{equation}

By \cite[Theorem 2]{driessche2002reproduction}, we obtain the following result.

\begin{theorem}
    The disease-free equilibrium $\mathcal{E}_{C0}$ of model \eqref{COVID-model} is locally asymptotically stable if $\mathcal{R}_C < 1$, but unstable if $\mathcal{R}_C > 1$.
\end{theorem}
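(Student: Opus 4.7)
The proof is essentially a direct application of Theorem 2 of van den Driessche and Watmough, which the authors invoke explicitly, so my plan is to verify that the hypotheses of that theorem hold for model \eqref{COVID-model} and then read off the conclusion.

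The plan is as follows. First, I would order the variables so that the infected compartments $(I,H)$ come first and the uninfected compartments $(S,R)$ come second, and rewrite the right-hand side of the equations for $I'$ and $H'$ as a difference $\mathcal{F}(x)-\mathcal{V}(x)$, where $\mathcal{F}$ collects the new-infection terms and $\mathcal{V}$ collects the remaining transfer terms. Explicitly, I would take
\[
\mathcal{F}(x)=\begin{pmatrix}\alpha SI\\ 0\end{pmatrix},\qquad
\mathcal{V}(x)=\begin{pmatrix}(\gamma+\eta+\mu)I\\ -\eta I+(\theta+\delta+\mu)H\end{pmatrix}.
\]
Note that the hospitalisation flow $\eta I$ is a progression between two already-infected classes, not a new infection, so it belongs in $\mathcal{V}$ and not in $\mathcal{F}$.

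Second, I would check the five structural assumptions (A1)--(A5) of \cite{driessche2002reproduction}: that $\mathcal{F}$ and $\mathcal{V}$ are nonnegative on the biologically meaningful state space, that $\mathcal{V}_i(x)=0$ whenever the $i$-th infected component is zero, that $\mathcal{F}$ has no entries for uninfected compartments, that $\mathcal{F}_i$ vanishes at the DFE, and that the linearisation of the uninfected subsystem at $\mathcal{E}_{C0}$ is asymptotically stable when the infected compartments are set to zero. The last condition reduces to checking that the $2\times 2$ Jacobian of
\[
(S,R)\mapsto(\Lambda+\sigma R-\mu S,\;-\mu R-\sigma R)
\]
at $(\Lambda/\mu,0)$ has eigenvalues with negative real part, which is immediate since the matrix is lower triangular with diagonal entries $-\mu$ and $-(\mu+\sigma)$. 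All other assumptions are routine from the structure of \eqref{COVID-model}.

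Third, I would confirm that the Jacobians at the DFE reproduce the matrices $F$ and $V$ given in the excerpt, so that $\mathcal{R}_C=\rho(FV^{-1})$. Since $V$ is lower triangular, its inverse is explicit, and $FV^{-1}$ has the single nonzero entry $\alpha\Lambda/[\mu(\gamma+\eta+\mu)]$ in its upper-left corner, whose spectral radius is $\mathcal{R}_C$. Applying Theorem 2 of \cite{driessche2002reproduction} then yields local asymptotic stability of $\mathcal{E}_{C0}$ when $\mathcal{R}_C<1$ and instability when $\mathcal{R}_C>1$, completing the proof.

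I do not expect any real obstacle here: the only point deserving care is the bookkeeping distinction between ``new infections'' and ``progression'' when splitting into $\mathcal{F}$ and $\mathcal{V}$, because a wrong assignment of the $\eta I$ term would change $F$ and hence the value of $\mathcal{R}_C$. Once that split is made correctly, everything else is algorithmic.
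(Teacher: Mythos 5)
Your proposal is correct and follows essentially the same route as the paper: the authors likewise apply Theorem 2 of van den Driessche and Watmough with exactly the decomposition you describe (the $\eta I$ term placed in $V$, not $F$), yielding the same matrices $F$ and $V$ and hence $\mathcal{R}_C=\rho(FV^{-1})=\alpha\Lambda/[\mu(\gamma+\eta+\mu)]$. The only difference is that the paper states the matrices and cites the theorem without explicitly verifying assumptions (A1)--(A5), which you do.
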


\subsection{Analysis of the bacterial pneumonia infection model}

The bacterial pneumonia model \eqref{bacterial-model} has a DFE given by
\[\mathcal{E}_{P0} = (S,I,R,B) = \left(\frac{\Lambda}{\mu},0,0,0\right).\]

To apply the next-generation matrix method, we will compute the matrix of new infections $F$ and the transition matrix $V$, which are given by
\[F =
\begin{bmatrix}
	\frac{b\Lambda}{\mu}
\end{bmatrix},\qquad
V =
\begin{bmatrix}
	\phi+\mu+\delta
\end{bmatrix}.\]

Using the same method as before, we obtain the basic reproduction number $\mathcal{R}_P$ of the bacterial pneumonia-only model, which is
\begin{equation}
\mathcal{R}_P = \frac{b\Lambda}{\mu(\phi+\mu+\delta)}.
\end{equation}

Using \cite[Theorem 2]{driessche2002reproduction} again, we obtain the following result.

\begin{theorem}
    The disease-free equilibrium $\mathcal{E}_{P0}$ of model \eqref{bacterial-model} is locally asymptotically stable if $\mathcal{R}_P < 1$, but unstable if $\mathcal{R}_P > 1$.
\end{theorem}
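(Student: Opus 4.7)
The plan is to invoke Theorem 2 of van den Driessche and Watmough \cite{driessche2002reproduction}, exactly as was done for $\mathcal{E}_{C0}$ in the previous subsection: once $\mathcal{R}_P$ has been identified as the spectral radius of $FV^{-1}$ evaluated at the DFE, local asymptotic stability of $\mathcal{E}_{P0}$ when $\mathcal{R}_P<1$ and instability when $\mathcal{R}_P>1$ follow immediately, provided the structural hypotheses (A1)--(A5) of that paper hold for model \eqref{bacterial-model}.

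The first step is to rewrite system \eqref{bacterial-model} in the form $x' = \mathcal{F}(x) - \mathcal{V}(x)$, where $\mathcal{F}$ collects the rates of appearance of new infections and $\mathcal{V}$ the remaining transfer rates. Following the author's decomposition, the single infected compartment is $I$, with $\mathcal{F}_I = bSI + b_1SB$ and $\mathcal{V}_I = (\phi+\mu+\delta)I$, which produces the scalar $F$ and $V$ displayed above and yields $\mathcal{R}_P = b\Lambda/[\mu(\phi+\mu+\delta)]$. I would then verify (A1)--(A4) directly from these formulae: non-negativity of $\mathcal{F}$ and of the outflow part of $\mathcal{V}$, positivity of the inflow part, vanishing of $\mathcal{F}$ at disease-free states, and the $M$-matrix sign conditions on $V$ (all trivial here since $V$ is a positive scalar).

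Assumption (A5), which requires that the Jacobian of the \emph{uninfected} subsystem at $\mathcal{E}_{P0}$ have all eigenvalues with negative real part, is the main obstacle. Linearising the subsystem $(S,R,B)$ at $\mathcal{E}_{P0}$ produces a block-triangular Jacobian whose $B$-row contributes the eigenvalue $r-m$, so (A5) holds only under the additional hypothesis $r<m$. I would therefore either state the theorem as conditional on $r<m$ (equivalently, on the bacterial-population reproduction number $r/m$ being below one, matching the third threshold advertised in the abstract), or alternatively promote $B$ to an infected compartment so that $F$ and $V$ become $2\times 2$, recompute the spectral radius to obtain a refined threshold that folds in $b_1$ and $p$, and then close the argument by citing Theorem 2 of \cite{driessche2002reproduction} in its multi-compartment form.
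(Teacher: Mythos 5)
Your suspicion is well founded, and it exposes a genuine gap --- not in your argument, but in the paper's. The paper's entire proof is the one-line citation of Theorem 2 of \cite{driessche2002reproduction} with the scalar $F=[b\Lambda/\mu]$ and $V=[\phi+\mu+\delta]$, so none of the issues you raise are addressed there. Your point about (A5) is correct: with $B$ classified as uninfected, the disease-free system contributes the eigenvalue $r-m$, so the hypotheses fail when $r>m$. But the defect is worse than that, and your first proposed remedy (simply adding the hypothesis $r<m$) does not repair it. When $B$ is uninfected, the disease-free set contains states with $B>0$ on which $\mathcal{F}_I=bSI+b_1SB$ does not vanish, so the block-triangular Jacobian structure that Theorem 2 relies on is destroyed: $\partial(\mathcal{F}_I-\mathcal{V}_I)/\partial B=b_1\Lambda/\mu\neq0$ and $\partial B'/\partial I=p\neq0$. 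Concretely, the Jacobian of \eqref{bacterial-model} at $\mathcal{E}_{P0}$ has eigenvalues $-\mu$, $-\mu$ together with those of the coupled block
\[
\begin{pmatrix}
\tfrac{b\Lambda}{\mu}-(\phi+\mu+\delta) & \tfrac{b_1\Lambda}{\mu}\\
p & r-m
\end{pmatrix},
\]
whose determinant equals $(\phi+\mu+\delta)(1-\mathcal{R}_P)(m-r)-pb_1\Lambda/\mu$. Even under $\mathcal{R}_P<1$ and $r<m$ this is negative --- hence $\mathcal{E}_{P0}$ is unstable --- whenever $pb_1\Lambda/\mu>(\phi+\mu+\delta)(1-\mathcal{R}_P)(m-r)$. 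So the theorem as stated is false in general unless the indirect loop $I\to B\to I$ is switched off ($p=0$ or $b_1=0$).

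Only the second of your two alternatives closes the argument. Taking $\{I,B\}$ as the infected set, e.g.\ $\mathcal{F}=(bSI+b_1SB,\;pI)^{T}$ and $\mathcal{V}=\big((\phi+\mu+\delta)I,\;(m-r)B+rB^{2}/\kappa\big)^{T}$, one needs $r<m$ for $V=\mathrm{diag}(\phi+\mu+\delta,\;m-r)$ to be a nonsingular M-matrix, and $\rho(FV^{-1})$ is then the positive root of $\lambda^{2}-\mathcal{R}_P\lambda-\tfrac{pb_1\Lambda}{\mu(\phi+\mu+\delta)(m-r)}=0$, which exceeds $\mathcal{R}_P$ whenever $pb_1>0$. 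Theorem 2 of \cite{driessche2002reproduction} then gives local asymptotic stability precisely when this corrected quantity is below one --- equivalently, when $\mathcal{R}_P+\tfrac{pb_1\Lambda}{\mu(\phi+\mu+\delta)(m-r)}<1$ --- and this coincides with the direct Routh--Hurwitz analysis of the $2\times2$ block above. In short: your proposal is not merely a different route to the same conclusion; carried out properly, it shows the stated theorem requires either additional hypotheses or a corrected reproduction number.
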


\section{Analysis of the COVID-19--bacterial pneumonia coinfection model}
\label{sec:coinfection}

Next, we consider the dynamics of the coinfection model \eqref{syst}. The existence and stability of equilibria for model \eqref{syst} will depend on three parameters, which are defined as follows:
\[\mathcal{R}_C := \frac{\alpha\Lambda}{\mu(\gamma+\eta+\mu)},\qquad
\mathcal{R}_P := \frac{b\Lambda}{\mu(\phi+\mu+\delta)},\qquad
\mathcal{R}_B := \frac{r}{m}.\]
As we saw in the previous section, the parameters $\mathcal{R}_C$ and $\mathcal{R}_P$ represent the basic reproduction numbers of COVID-19 and bacterial pneumonia, respectively. On the other hand, $\mathcal{R}_B$ can be interpreted as the reproduction number of bacterial population in the hospital.

\subsection{Equilibria of the model}

By direct computation, we obtain the following result about the equilibria of model \eqref{syst}.

\begin{theorem}\label{teo:equilibria}
    The coinfection model \eqref{syst} has the following steady states:
    
    \begin{enumerate}
        \item The disease-free, bacterial population-free equilibrium:
        \begin{equation*}
        \mathcal{E}_0 = \left(X_{SS}^{(0)},0,0,0,0,0,0,0,0,0,0,0,0\right),
        \end{equation*}
        where
        \[X_{SS}^{(0)} = \frac{\Lambda}{\mu}.\]
        
        \item The disease-free, bacterial population-present equilibrium:
        \begin{equation*}
        \mathcal{E}_1 = \left(X_{SS}^{(1)},0,0,0,0,0,0,0,0,0,0,0,B^{(1)}\right),
        \end{equation*}
        where
        \[X_{SS}^{(1)} = \frac{\Lambda}{\mu},\qquad
        B^{(1)} = \frac{\kappa}{r}(r-m).\]
        This equilibrium exists if and only if $\mathcal{R}_B>1$.
        
        \item The COVID-19-free, pneumonia-present, bacterial population-free equilibrium:
        \begin{equation*}
        \mathcal{E}_2 = \left(X_{SS}^{(2)},0,0,0,X_{IS}^{(2)},0,0,0,X_{RS}^{(2)},0,0,0,0\right),
        \end{equation*}
        where
        \[X_{SS}^{(2)} = \frac{\mu+\delta+\phi}{b},\qquad
        X_{IS}^{(2)} = \frac{\Lambda}{\mu+\delta+\phi}-\frac{\mu}{b},\qquad
        X_{RS}^{(2)} = \frac{\phi}{\mu}X_{IS}^{(2)}.\]
        This equilibrium exists if and only if $\mathcal{R}_P > 1$.
        
        \item The COVID-19-free, pneumonia-present, bacterial population-present equilibrium:
        \begin{equation*}
        \mathcal{E}_3 = \left(X_{SS}^{(3)},0,0,0,X_{IS}^{(3)},0,0,0,X_{RS}^{(3)},0,0,0,B^{(3)}\right),
        \end{equation*}
        where
        \begin{align*}
        	 & X_{SS}^{(3)} = \frac{\mu+\delta+\phi}{b},\qquad
        X_{IS}^{(3)} = \frac{\Lambda}{\mu+\delta+\phi}-\frac{\mu}{b},\qquad
        X_{RS}^{(3)} = \frac{\phi}{\mu}X_{IS}^{(3)}, \\
        	 & B^{(3)} = \frac{\kappa}{r}(r-m).
        \end{align*}
        This equilibrium exists if and only if
        \[\mathcal{R}_B>1
        \quad\text{and}\quad
        \mathcal{R}_P > 1.\]
        
        \item The COVID-19-present, pneumonia-free, bacterial population-free equilibrium:
        \begin{equation*}
        \mathcal{E}_4 = \left(X_{SS}^{(4)},X_{SI}^{(4)},X_{SH}^{(4)},X_{SR}^{(4)},0,0,0,0,0,0,0,0,0\right),
        \end{equation*}
        where
        \begin{align*}
        	 & X_{SS}^{(4)} = \frac{\gamma+\eta+\mu}{\alpha},\quad
        X_{SI}^{(4)} = \frac{(\mu+\sigma)(\theta+\mu+\delta_1)\big[\alpha\Lambda-\mu(\gamma+\eta+\mu)\big]}{\alpha\big[\mu(\theta+\mu+\delta_1)(\gamma+\eta+\mu+\sigma) + \eta\sigma(\mu+\delta_1)\big]}, \\
        	 & X_{SH}^{(4)} = \frac{\eta}{\theta+\mu+\delta_1}X_{SI}^{(4)},\quad
        X_{SR}^{(4)} = \left(\frac{\gamma}{\mu+\sigma} + \frac{\eta\theta}{(\mu+\sigma)(\theta+\mu+\delta_1)}\right)X_{SI}^{(4)}.
        \end{align*}
        This equilibrium exists if and only if
        \[\mathcal{R}_C > 1.\]
    \end{enumerate}
\end{theorem}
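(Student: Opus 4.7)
The plan is to set every derivative in system \eqref{syst} equal to zero and, for each of the five candidate equilibria, enforce that the appropriate block of compartments vanishes, then solve the resulting reduced algebraic system. Two independent dichotomies drive the case split: (i) whether any pneumonia-infected compartment is nonzero, and (ii) whether any COVID-19-infected compartment is nonzero. A third branching arises from the bacterial equation: if $X_{IH}=0$, then $0 = rB(1-B/\kappa) - mB$ forces either $B=0$ or $B = \kappa(r-m)/r$, the latter being nonnegative exactly when $\mathcal{R}_B = r/m > 1$. This doubles the disease-free and the COVID-free/pneumonia-present families, yielding $\mathcal{E}_0,\mathcal{E}_1$ and $\mathcal{E}_2,\mathcal{E}_3$ respectively. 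For $\mathcal{E}_4$ (COVID-present, pneumonia-free) the source term $b_2 X_{SH} B$ in the $X_{IH}$ equation, combined with $X_{II}=X_{IH}=0$, forces $X_{SH} B = 0$; since COVID presence makes $X_{SH}^{(4)}>0$, one necessarily has $B=0$, which is why only one such equilibrium appears.

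\textbf{The disease-free and COVID-free branches.} For $\mathcal{E}_0$ and $\mathcal{E}_1$, setting every infected and recovered compartment to zero reduces the first equation of \eqref{syst} to $\Lambda = \mu X_{SS}$, which gives $X_{SS}=\Lambda/\mu$; the value of $B$ is then selected from the bacterial dichotomy above. For $\mathcal{E}_2$ and $\mathcal{E}_3$, setting $X_{SI}=X_{SH}=X_{II}=X_{IH}=X_{RI}=X_{RH}=0$ and imposing $X_{IS}\neq 0$ reduces the $X_{IS}$ equation to $bX_{SS}X_{IS} = (\mu+\delta+\phi)X_{IS}$, which after cancellation yields $X_{SS}^{(2)} = (\mu+\delta+\phi)/b$. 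The $X_{RS}$ equation then gives $X_{RS}=(\phi/\mu)X_{IS}$, and substituting back into the $X_{SS}$ equation produces a single linear equation in $X_{IS}$ whose solution is the stated formula; positivity of $X_{IS}^{(2)}$ is immediately equivalent to $\Lambda/(\mu+\delta+\phi) > \mu/b$, that is, $\mathcal{R}_P > 1$. The presence or absence of $B$ in $\mathcal{E}_3$ versus $\mathcal{E}_2$ is again governed by the bacterial dichotomy, giving the condition $\mathcal{R}_B > 1$ for $\mathcal{E}_3$.

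\textbf{The branch $\mathcal{E}_4$ and main obstacle.} After setting all pneumonia and bacterial compartments to zero, the $X_{SI}$ equation gives $X_{SS}^{(4)} = (\gamma+\eta+\mu)/\alpha$ by the same cancellation argument, and the $X_{SH}$ equation yields $X_{SH}^{(4)} = \frac{\eta}{\theta+\mu+\delta_1}X_{SI}^{(4)}$. The $X_{SR}$ equation becomes $(\mu+\sigma)X_{SR} = \gamma X_{SI} + \theta X_{SH}$, so $X_{SR}^{(4)}$ is determined as an explicit linear multiple of $X_{SI}^{(4)}$. The only remaining unknown, $X_{SI}^{(4)}$, is obtained by substituting these expressions into the $X_{SS}$ equation $0 = \Lambda + \sigma X_{SR} - \mu X_{SS} - \alpha X_{SS} X_{SI}$, producing a single linear equation. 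The main obstacle is purely the bookkeeping at this stage: the recovery feedback through $\sigma X_{SR}$ couples back into the $X_{SS}$ balance and, after clearing denominators, produces the slightly opaque common factor $\mu(\theta+\mu+\delta_1)(\gamma+\eta+\mu+\sigma) + \eta\sigma(\mu+\delta_1)$ appearing in the denominator of $X_{SI}^{(4)}$. Handling this cleanly requires eliminating $X_{SH}$ and $X_{SR}$ in terms of $X_{SI}$ before the final substitution, so one is left with a single scalar equation whose coefficient can be expanded and factored transparently. Positivity of $X_{SI}^{(4)}$ then reduces at once to $\alpha\Lambda > \mu(\gamma+\eta+\mu)$, i.e., $\mathcal{R}_C > 1$, and nonnegativity of the other components follows automatically from their explicit expressions as positive multiples of $X_{SI}^{(4)}$.
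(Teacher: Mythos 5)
Your proposal is correct and follows essentially the same route as the paper's own (very terse) proof: set the right-hand sides to zero, split cases according to which infected compartments vanish, and solve the resulting algebraic systems, with the logistic equation for $B$ supplying the extra dichotomy $B=0$ versus $B=\kappa(r-m)/r$ that doubles the disease-free and COVID-free branches. Your computation of $X_{SI}^{(4)}$ reproduces the stated denominator $\mu(\theta+\mu+\delta_1)(\gamma+\eta+\mu+\sigma)+\eta\sigma(\mu+\delta_1)$ exactly, and your observation that $b_2X_{SH}B=0$ forces $B=0$ in the pneumonia-free, COVID-present branch is a useful detail the paper leaves implicit.
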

\begin{proof}
	Equilibria $\mathcal{E}_0$, $\mathcal{E}_1$, $\mathcal{E}_2$ and $\mathcal{E}_3$ are obtained by assuming that $X_{SI}=0$ in the system at equilibrium and solving the resulting algebraic equations. This yields four different cases: one for each equilibrium.
	
	On the other hand, assuming $X_{SI}>0$ and $X_{IS}=0$ results in only one case, corresponding to the equilibrium $\mathcal{E}_4$.
	
	The case when $X_{SI}>0$ and $X_{IS}>0$ will be discussed below.
\end{proof}

Theorem \ref{teo:equilibria} shows that, under certain conditions, the coinfection model has five different steady states. Moreover, we conjecture that a sixth equilibrium, with positive values for all variables, may exist. We will denote this interior equilibrium by $\mathcal{E}_5$. Since the theoretical analysis becomes too cumbersome in this case, we will resort to numerical simulations to investigate the dynamics of equilibrium $\mathcal{E}_5$ (see Section \ref{sec:num}).

\subsection{Stability analysis}

We will now analyse the local stability for the equilibria of system \eqref{syst}. Our results focus only on the disease-free equilibria $\mathcal{E}_0$ and $\mathcal{E}_1$.

\begin{theorem}\ 
	\begin{enumerate}
		\item[(i)] The disease-free, bacterial population-free equilibrium $\mathcal{E}_0$ is locally asymptotically stable if and only if
		\begin{equation}\label{stable-E0}
			\mathcal{R}_C < 1,\quad
			\mathcal{R}_P < 1\quad\text{and}\quad
			\mathcal{R}_B < 1.
		\end{equation}
		
		\item[(ii)] The disease-free, bacterial population-present equilibrium $\mathcal{E}_1$ is locally asymptotically stable if and only if
		\begin{equation}\label{stable-E1}
			\mathcal{R}_C < 1,\quad
			\mathcal{R}_P < 1\quad\text{and}\quad
			\mathcal{R}_B > 1.
		\end{equation}
	\end{enumerate}
\end{theorem}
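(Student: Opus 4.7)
The plan is to compute the Jacobian of the thirteen-dimensional system at each of $\mathcal{E}_0$ and $\mathcal{E}_1$ and to exhibit an ordering of the state variables which makes that Jacobian block lower triangular, thereby reducing the spectrum to its diagonal entries. First I would split the state into an \emph{uninfected} block $\mathbf{u} = (X_{SS}, X_{SR}, X_{RS}, X_{RR})$ and an \emph{infected/bacterial} block $\mathbf{v} = (X_{SI}, X_{SH}, X_{IS}, X_{II}, X_{IH}, X_{IR}, X_{RI}, X_{RH}, B)$. At either DFE every entry of $\mathbf{v}$ is zero, and inspection of \eqref{syst} shows that each $\mathbf{v}$-equation depends on each $\mathbf{u}$-variable only through terms carrying a factor in $\mathbf{v}$; hence $J_{vu} = 0$ at $\mathcal{E}_0$ and $\mathcal{E}_1$, and the characteristic polynomial factors as $\det(J_{uu} - \lambda I)\det(J_{vv} - \lambda I)$.

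The block $J_{uu}$ is identical at the two equilibria, block diagonal in the pairs $(X_{SS}, X_{SR})$ and $(X_{RS}, X_{RR})$, each pair contributing an upper triangular $2 \times 2$ matrix with diagonal $-\mu$ and $-(\mu+\sigma)$. All four eigenvalues are negative and play no role in the stability condition.

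The core of the argument is the analysis of $J_{vv}$. I would trace the coupling graph of \eqref{syst} evaluated at the DFE: $X_{II}$ depends only on itself; $X_{RI}$ and $X_{IH}$ only on $X_{II}$ in addition to themselves; $X_{SI}$ on $X_{II}, X_{RI}$; $X_{SH}$ on $X_{SI}$; $X_{IR}$ on $X_{II}, X_{IH}$; $X_{IS}$ on $X_{II}, X_{IR}$; $X_{RH}$ on $X_{RI}, X_{IH}$; and $B$ on $X_{IH}$. The ordering $(X_{II}, X_{RI}, X_{SI}, X_{SH}, X_{IH}, X_{IR}, X_{IS}, X_{RH}, B)$ therefore makes $J_{vv}$ lower triangular, so its eigenvalues are exactly its diagonal entries. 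Six of these are intrinsic loss rates and are manifestly negative. The three distinguished entries are the diagonal of $X_{SI}$, equal to $\tfrac{\alpha\Lambda}{\mu}-(\gamma+\eta+\mu) = (\mathcal{R}_C-1)(\gamma+\eta+\mu)$; the diagonal of $X_{IS}$, equal to $\tfrac{b\Lambda}{\mu}-(\mu+\delta+\phi) = (\mathcal{R}_P-1)(\mu+\delta+\phi)$; and the diagonal of $B$, equal to $r(1-2B^{(i)}/\kappa)-m$, which reduces to $m(\mathcal{R}_B-1)$ at $\mathcal{E}_0$ (where $B^{(0)}=0$) and to $-m(\mathcal{R}_B-1)$ at $\mathcal{E}_1$ (where $B^{(1)} = \kappa(r-m)/r$).

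Requiring all three of these distinguished eigenvalues to be negative gives exactly the conditions \eqref{stable-E0} for $\mathcal{E}_0$ and \eqref{stable-E1} for $\mathcal{E}_1$; since each is a strict monotone function of $\mathcal{R}_C, \mathcal{R}_P, \mathcal{R}_B$ respectively, the converse direction is immediate. The main obstacle I anticipate is checking that the proposed triangular ordering still works at $\mathcal{E}_1$: the term $b_2 X_{SH} B$ in the $X_{IH}$ equation now contributes $\partial X_{IH}'/\partial X_{SH} = b_2 B^{(1)} > 0$, so I must confirm that this entry sits strictly below the diagonal (it does, since $X_{SH}$ precedes $X_{IH}$ in the ordering) and that the companion derivative $\partial X_{SH}'/\partial B = -b_2 X_{SH}$ still vanishes because $X_{SH}^{(1)} = 0$. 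Once that bookkeeping is verified, the $\mathcal{E}_0$ and $\mathcal{E}_1$ cases are handled by the same calculation, with only the final diagonal entry differing.
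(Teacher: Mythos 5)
Your proposal is correct and follows essentially the same route as the paper: linearize at $\mathcal{E}_0$ and $\mathcal{E}_1$, observe that the characteristic polynomial splits into linear factors, and read off that the only sign-indefinite eigenvalues are $(\gamma+\eta+\mu)(\mathcal{R}_C-1)$, $(\mu+\delta+\phi)(\mathcal{R}_P-1)$, and $\pm m(\mathcal{R}_B-1)$. The paper simply writes out the full $13\times13$ Jacobian and states the factored polynomial, whereas you make the underlying reason explicit by exhibiting a permutation that triangularizes the infected block (including the correct handling of the $b_2 B^{(1)}$ entry at $\mathcal{E}_1$); the conclusions agree.
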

\begin{proof}
	The Jacobian of system \eqref{syst} evaluated at $\mathcal{E}_0$ is given by
	{\small 
	\begin{equation*}
		J_0 =
		\begin{bmatrix}
			-\mu & -\aLm    & 0      & \sigma & -\bLm    & -\tfrac{(\alpha+b)\Lambda}{\mu} & 0             & -\bLm       & 0    & -\aLm  & 0      & 0      & 0   \\
			0    & \aLm-k_1 & 0      & 0      & 0        & \aLm                            & 0             & 0           & 0    & \aLm   & 0      & 0      & 0   \\
			0    & \eta     & -k_2   & 0      & 0        & 0                               & 0             & 0           & 0    & 0      & 0      & 0      & 0   \\
			0    & \gamma   & \theta & -k_3   & 0        & 0                               & 0             & 0           & 0    & 0      & 0      & 0      & 0   \\
			0    & 0        & 0      & 0      & \bLm-k_4 & \bLm                            & 0             & \bLm+\sigma & 0    & 0      & 0      & 0      & 0   \\
			0    & 0        & 0      & 0      & 0        & -k_5                            & 0             & 0           & 0    & 0      & 0      & 0      & 0   \\
			0    & 0        & 0      & 0      & 0        & \eta+\eta_1                     & -\theta_1-k_6 & 0           & 0    & 0      & 0      & 0      & 0   \\
			0    & 0        & 0      & 0      & 0        & \gamma_1                        & \theta_1      & -k_7        & 0    & 0      & 0      & 0      & 0   \\
			0    & 0        & 0      & 0      & \phi     & 0                               & 0             & 0           & -\mu & 0      & 0      & \sigma & 0   \\
			0    & 0        & 0      & 0      & 0        & \phi_1                          & 0             & 0           & 0    & -k_1   & 0      & 0      & 0   \\
			0    & 0        & 0      & 0      & 0        & 0                               & \phi_2        & 0           & 0    & \eta   & -k_2   & 0      & 0   \\
			0    & 0        & 0      & 0      & 0        & 0                               & 0             & \phi        & 0    & \gamma & \theta & -k_3   & 0   \\
			0    & 0        & 0      & 0      & 0        & 0                               & p             & 0           & 0    & 0      & 0      & 0      & r-m
		\end{bmatrix}
	\end{equation*}}
	where
	\begin{align*}
		 & k_1 = \gamma+\eta+\mu,\qquad
		k_2 = \theta+\mu+\delta_1,\qquad
		k_3 = \mu+\sigma,\qquad
		k_4 = \mu+\delta+\phi,                \\
		 & k_5 = \gamma_1+\eta+\eta_1+\mu+\delta_0+\phi_1,\qquad
		k_6 = \mu+\delta_1+\delta_2+\phi_2,\qquad
		k_7 = \mu+\delta+\phi+\sigma.
	\end{align*}
	
	From this, we obtain the characteristic polynomial
	\begin{align*}
		 & \left(\lambda+\mu\right)^2 \left(\lambda+k_1\right) \left(\lambda+k_2\right)^2 \left(\lambda+k_3\right)^2 \left(\lambda+k_5\right) \left(\lambda+\theta_1+k_6\right) \left(\lambda+k_7\right) \\
		 & \quad\times \left(\lambda+k_1-\frac{\alpha\Lambda}{\mu}\right) \left(\lambda+k_4-\frac{b\Lambda}{\mu}\right) \left(\lambda+m-r\right)
		= 0.
	\end{align*}
	
	Due to positivity of parameters, it is clear that all eigenvalues have negative real part if and only if
	\[\gamma+\eta+\mu-\frac{\alpha\Lambda}{\mu} > 0,\qquad
	\mu+\delta+\phi-\frac{b\Lambda}{\mu} > 0\quad
	\text{and}\qquad
	m-r > 0,\]
	which is equivalent to the condition \eqref{stable-E0}. This proves part (i) of the theorem.
	
	Next, we compute the Jacobian at $\mathcal{E}_1$, which is given by
	{\small 
	\begin{equation*}
		J_1 =
		\begin{bmatrix}
			-\mu & -\aLm    & 0        & \sigma & -\bLm    & -\tfrac{(\alpha+b)\Lambda}{\mu} & 0             & -\bLm       & 0    & -\aLm  & 0      & 0      & 0   \\
			0    & \aLm-k_1 & 0        & 0      & 0        & \aLm                            & 0             & 0           & 0    & \aLm   & 0      & 0      & 0   \\
			0    & \eta     & -k_0-k_2 & 0      & 0        & 0                               & 0             & 0           & 0    & 0      & 0      & 0      & 0   \\
			0    & \gamma   & \theta   & -k_3   & 0        & 0                               & 0             & 0           & 0    & 0      & 0      & 0      & 0   \\
			0    & 0        & 0        & 0      & \bLm-k_4 & \bLm                            & 0             & \bLm+\sigma & 0    & 0      & 0      & 0      & 0   \\
			0    & 0        & 0        & 0      & 0        & -k_5                            & 0             & 0           & 0    & 0      & 0      & 0      & 0   \\
			0    & 0        & k_0      & 0      & 0        & \eta+\eta_1                     & -\theta_1-k_6 & 0           & 0    & 0      & 0      & 0      & 0   \\
			0    & 0        & 0        & 0      & 0        & \gamma_1                        & \theta_1      & -k_7        & 0    & 0      & 0      & 0      & 0   \\
			0    & 0        & 0        & 0      & \phi     & 0                               & 0             & 0           & -\mu & 0      & 0      & \sigma & 0   \\
			0    & 0        & 0        & 0      & 0        & \phi_1                          & 0             & 0           & 0    & -k_1   & 0      & 0      & 0   \\
			0    & 0        & 0        & 0      & 0        & 0                               & \phi_2        & 0           & 0    & \eta   & -k_2   & 0      & 0   \\
			0    & 0        & 0        & 0      & 0        & 0                               & 0             & \phi        & 0    & \gamma & \theta & -k_3   & 0   \\
			0    & 0        & 0        & 0      & 0        & 0                               & p             & 0           & 0    & 0      & 0      & 0      & m-r
		\end{bmatrix},
	\end{equation*}}
	where $k_0 = b_2\kappa\left(1-\frac m r\right)$, and $k_1,\ldots,k_7$ are as defined above. Notice that $k_0>0$ if and only if $\mathcal{R}_B>1$.
	
	The characteristic polynomial at $\mathcal{E}_1$ is
	\begin{align*}
		 & \left(\lambda+\mu\right)^2 \left(\lambda+k_1\right) \left(\lambda+k_2\right) \left(\lambda+k_3\right)^2 \left(\lambda+k_5\right) \left(\lambda+\theta_1+k_6\right) \left(\lambda+k_7\right) \\
		 & \quad\times
		 \left(\lambda+k_0+k_2\right) \left(\lambda+k_1-\frac{\alpha\Lambda}{\mu}\right) \left(\lambda+k_4-\frac{b\Lambda}{\mu}\right) \left(\lambda+r-m\right)
		= 0.
	\end{align*}
	Then, all eigenvalues have negative real part if and only if
	\[k_0+k_2 > 0,\quad
	\gamma+\eta+\mu-\frac{\alpha\Lambda}{\mu} > 0,\quad
	\mu+\delta+\phi-\frac{b\Lambda}{\mu} > 0\quad
	\text{and}\quad
	r-m > 0.\]
	The first of these inequalities holds automatically when $\mathcal{R}_B>1$. Hence, we can conclude that $\mathcal{E}_1$ is locally asymptotically stable if and only if the last three inequalities hold, which is equivalent to condition \eqref{stable-E1}. Thus, the proof of (ii) is complete.
\end{proof}

\section{Numerical analysis}
\label{sec:num}

In this section, we perform some simulations for system \eqref{syst} to illustrate the dynamics of the coinfection model in some cases that are not covered by the analysis in Section \ref{sec:coinfection}. We will consider the initial conditions
\begin{align*}
	 & X_{SS}(0) = 8.33\times10^7,\quad
	X_{SI}(0) = 10^5,\quad
	X_{SH}(0) = 10^3,\quad
	X_{SR}(0) = 10^5,\quad
	X_{IS}(0) = 10^3, \\
	 & B(0) = 0.8,\quad
	 X_{II}(0) = X_{IH}(0) = X_{IR}(0) = X_{RS}(0) = X_{RI}(0) = X_{RH}(0) = X_{RR}(0) = 0.
\end{align*}

Throughout this section, we will use the parameter values shown in Table \ref{tab:param}. These are based on the values used in other models for COVID-19, although they do not necessarily correspond to the dynamics in any specific country. Thus, we obtain a fixed value for $\mathcal{R}_C$, which is greater than one ($\mathcal{R}_C=1.2294$), while $\mathcal{R}_P$ and $\mathcal{R}_B$ will vary as the parameters $b$ and $r$ take different values.

\begin{table}
	\caption{Parameter values used for the coinfection model.}
	\label{tab:param}
	\centering
	\begin{tabular}{ccc}
		\hline
		Parameter  &       Value        &                 Unit                  \\ \hline
		$\Lambda$  &        2000        &              people/day               \\
		  $\mu$    & $2.4\times10^{-5}$ & $(\text{people}\cdot\text{day})^{-1}$ \\
		 $\sigma$  &       1/100        &           $\text{day}^{-1}$           \\
		 $\gamma$  &        1/12        &           $\text{day}^{-1}$           \\
		$\gamma_1$ &        1/20        &           $\text{day}^{-1}$           \\
		 $\theta$  &        1/14        &           $\text{day}^{-1}$           \\
		$\theta_1$ &        1/24        &           $\text{day}^{-1}$           \\
		  $b_1$    &  $2\times10^{-9}$  & $(\text{people}\cdot\text{day})^{-1}$ \\
		  $b_2$    &        0.1         &           $\text{day}^{-1}$           \\
		 $\delta$  &       0.001        &           $\text{day}^{-1}$           \\
		$\delta_0$ &       0.005        &           $\text{day}^{-1}$           \\
		$\delta_1$ &        0.01        &           $\text{day}^{-1}$           \\
		$\delta_2$ &        0.2         &           $\text{day}^{-1}$           \\
		  $\eta$   &        0.12        &           $\text{day}^{-1}$           \\
		 $\eta_1$  &        0.1         &           $\text{day}^{-1}$           \\
		  $\phi$   &        1/14        &           $\text{day}^{-1}$           \\
		 $\phi_1$  &        1/30        &           $\text{day}^{-1}$           \\
		 $\phi_2$  &        1/40        &           $\text{day}^{-1}$           \\
		   $p$     &     $10^{-5}$      & $(\text{people}\cdot\text{day})^{-1}$ \\
		 $\kappa$  &         1          &                                       \\
		   $m$     &        0.01        &           $\text{day}^{-1}$           \\
		 $\alpha$  &  $3\times10^{-9}$  & $(\text{people}\cdot\text{day})^{-1}$ \\
		$\alpha_1$ &     $10^{-8}$      & $(\text{people}\cdot\text{day})^{-1}$ \\
		   $b$     &      variable      & $(\text{people}\cdot\text{day})^{-1}$ \\
		   $r$     &      variable      &           $\text{day}^{-1}$           \\ \hline
	\end{tabular}
\end{table}

\textbf{Case 1.} When $b=10^{-10}$ and $r=0.004$, we have $\mathcal{R}_P = 0.1150 < 1$ and $\mathcal{R}_B = 0.4 < 1$. The time plots of the solutions for this case are shown in Figure \ref{fig:Rc}. The solutions converge to a positive equilibrium
\begin{align*}
	\mathcal{E}_5 & \approx \big(6.3418\times10^7,\ 5684,\ 3153,\ 69716,\ 191.8,\ 0.0735,\ 1537,\ 776.8, \\
	              & \qquad 4.368\times10^6,\ 391.5,\ 1048,\ 16263,\ 1.3487\big).
\end{align*}

\begin{figure}
	\begin{subfigure}[b]{0.5\linewidth}
		\centering
		\includegraphics[width=0.95\linewidth]{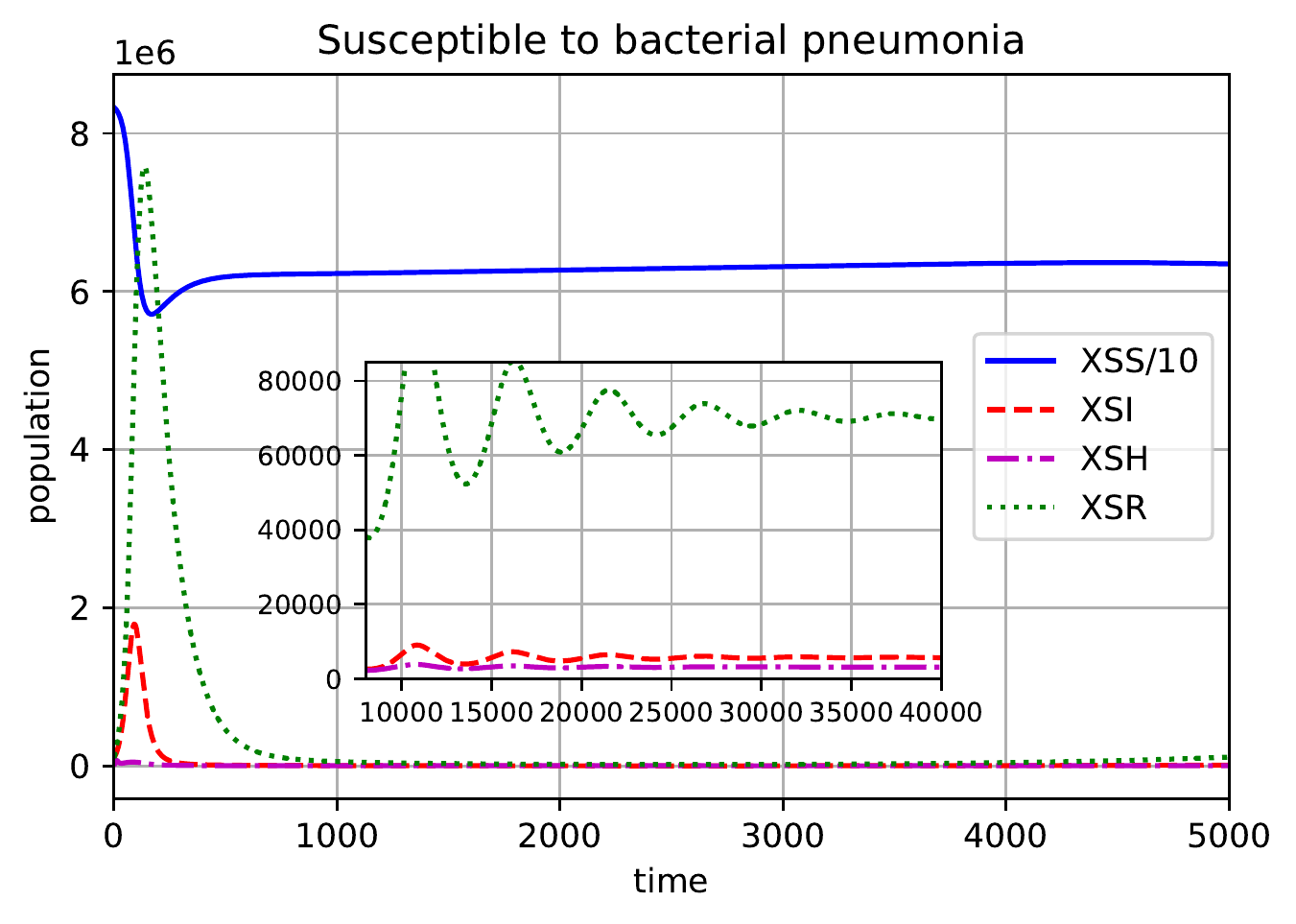}
		\vspace{4ex}
	\end{subfigure}
	\begin{subfigure}[b]{0.5\linewidth}
		\centering
		\includegraphics[width=0.95\linewidth]{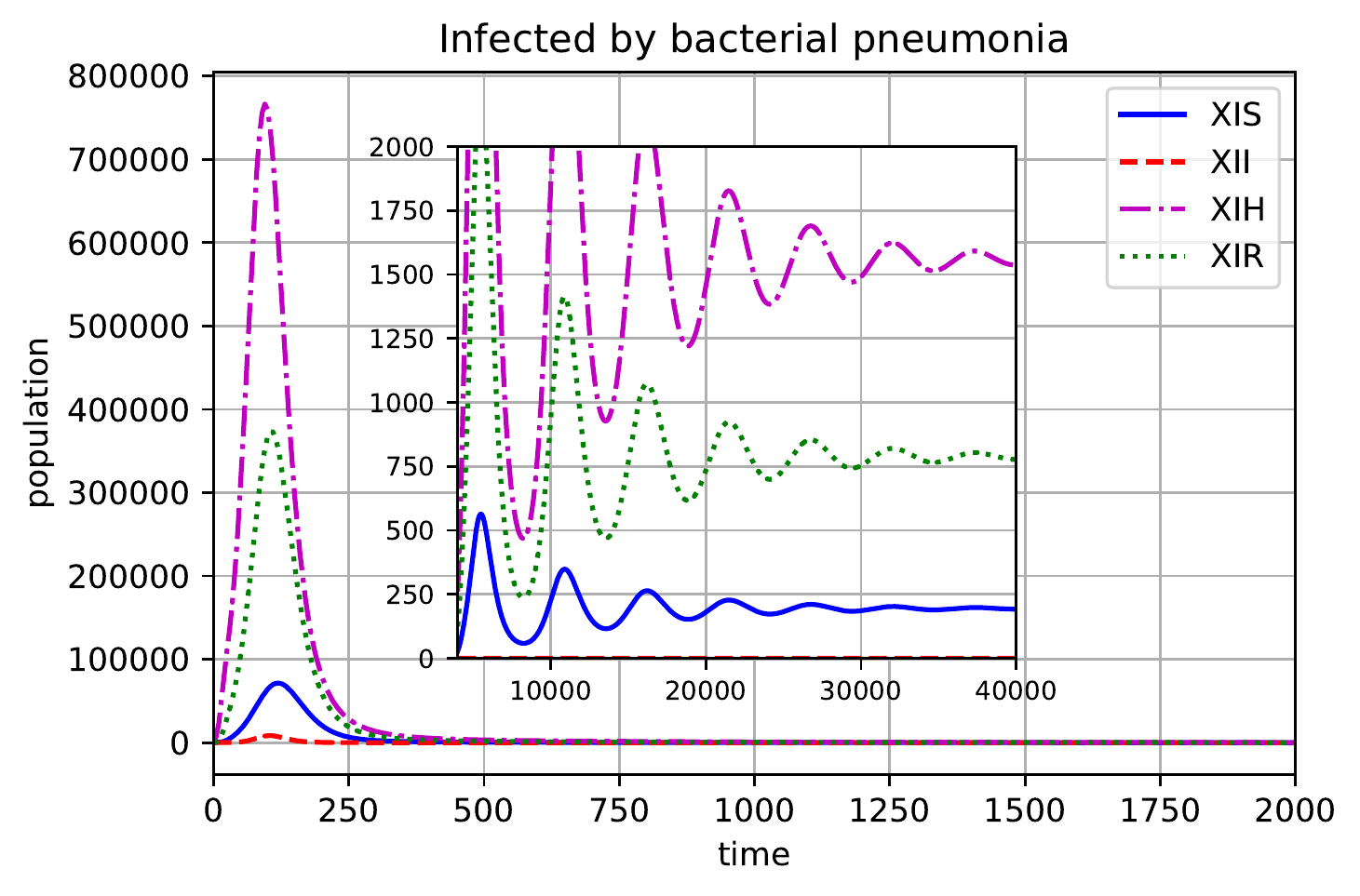}
		\vspace{4ex}
	\end{subfigure} 
	\begin{subfigure}[b]{0.5\linewidth}
		\centering
		\includegraphics[width=0.95\linewidth]{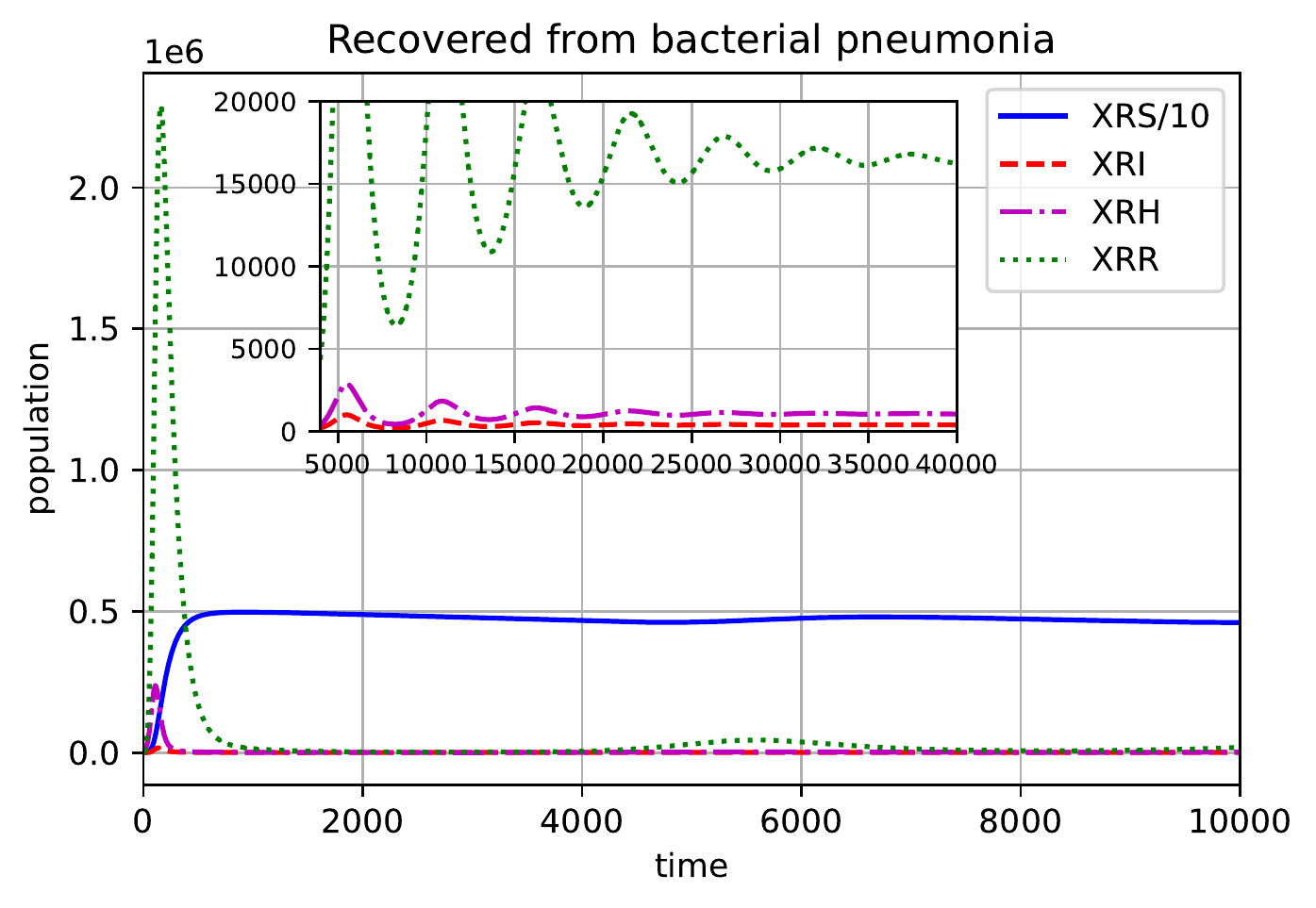}
	\end{subfigure}
	\begin{subfigure}[b]{0.5\linewidth}
		\centering
		\includegraphics[width=0.95\linewidth]{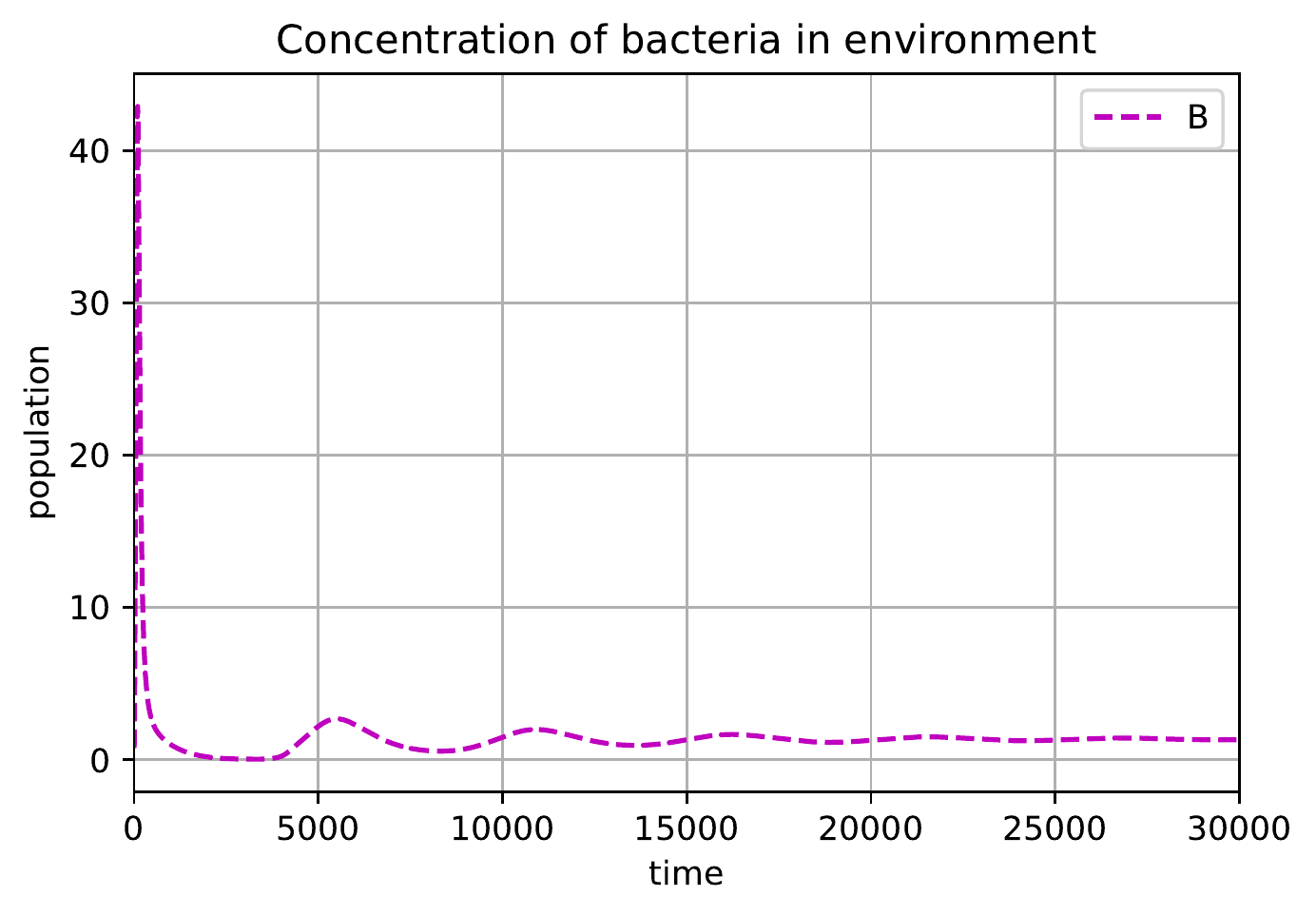}
	\end{subfigure} 
	\caption{Dynamics of the coinfection model when $\mathcal{R}_C>1$, $\mathcal{R}_P<1$ and $\mathcal{R}_B<1$.}
	\label{fig:Rc} 
\end{figure}

\textbf{Case 2.} When $b=9\times10^{-10}$ and $r=0.004$, we have $\mathcal{R}_P = 1.0352 > 1$ and $\mathcal{R}_B = 0.4 < 1$. The time plots of the solutions are depicted in Figure \ref{fig:RcRp}; we can see that they converge to a positive equilibrium
\begin{align*}
	\mathcal{E}_5 & \approx \big(5.711\times10^7,\ 5602,\ 3130,\ 6.89\times10^4,\ 2229,\ 0.59,\ 1509,\ 765, \\
	              & \qquad 1.066\times10^7,\ 1046,\ 2005,\ 2.85\times10^4,\ 1.33\big).
\end{align*}

\begin{figure}
	\begin{subfigure}[b]{0.5\linewidth}
		\centering
		\includegraphics[width=0.95\linewidth]{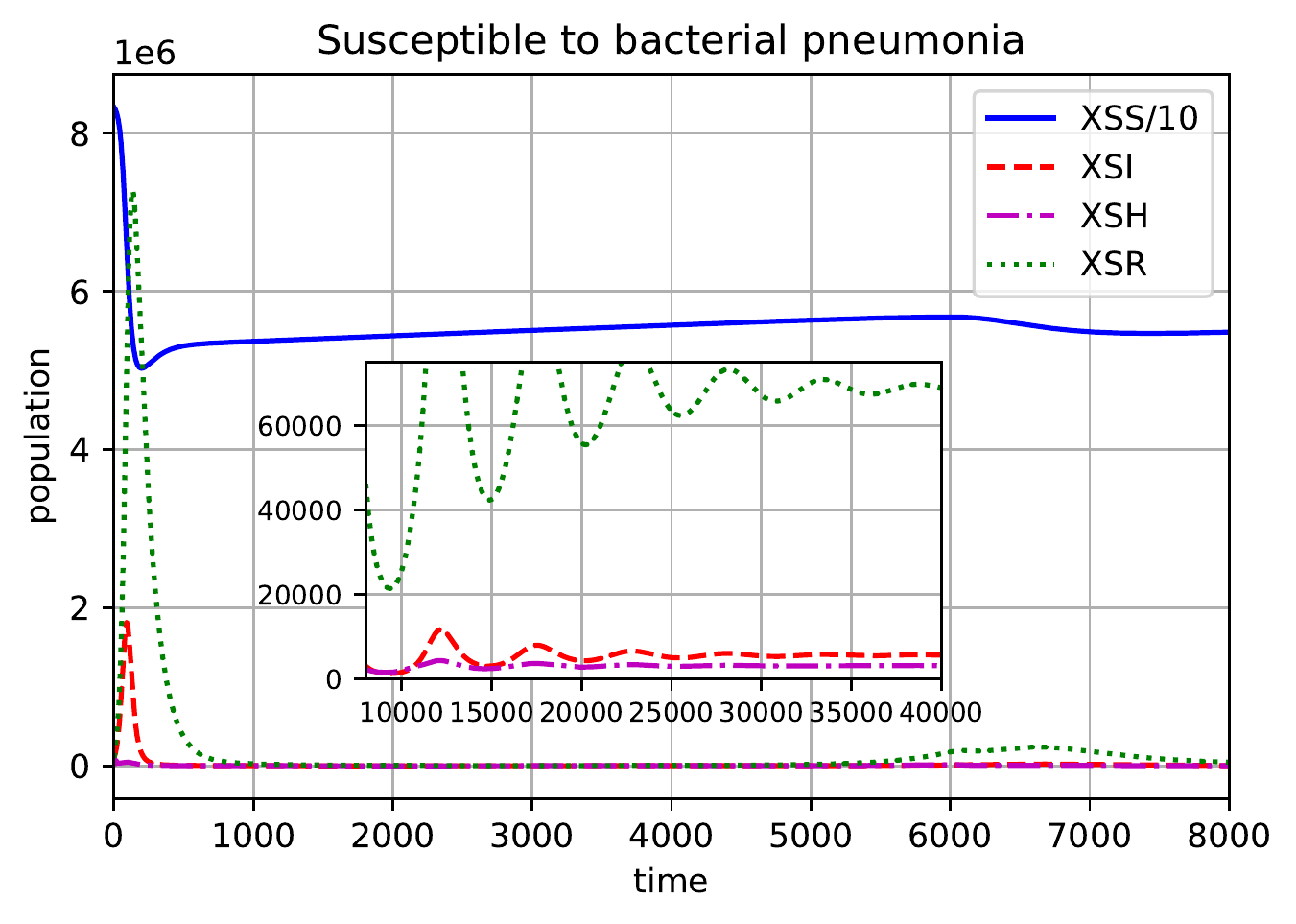}
		\vspace{4ex}
	\end{subfigure}
	\begin{subfigure}[b]{0.5\linewidth}
		\centering
		\includegraphics[width=0.95\linewidth]{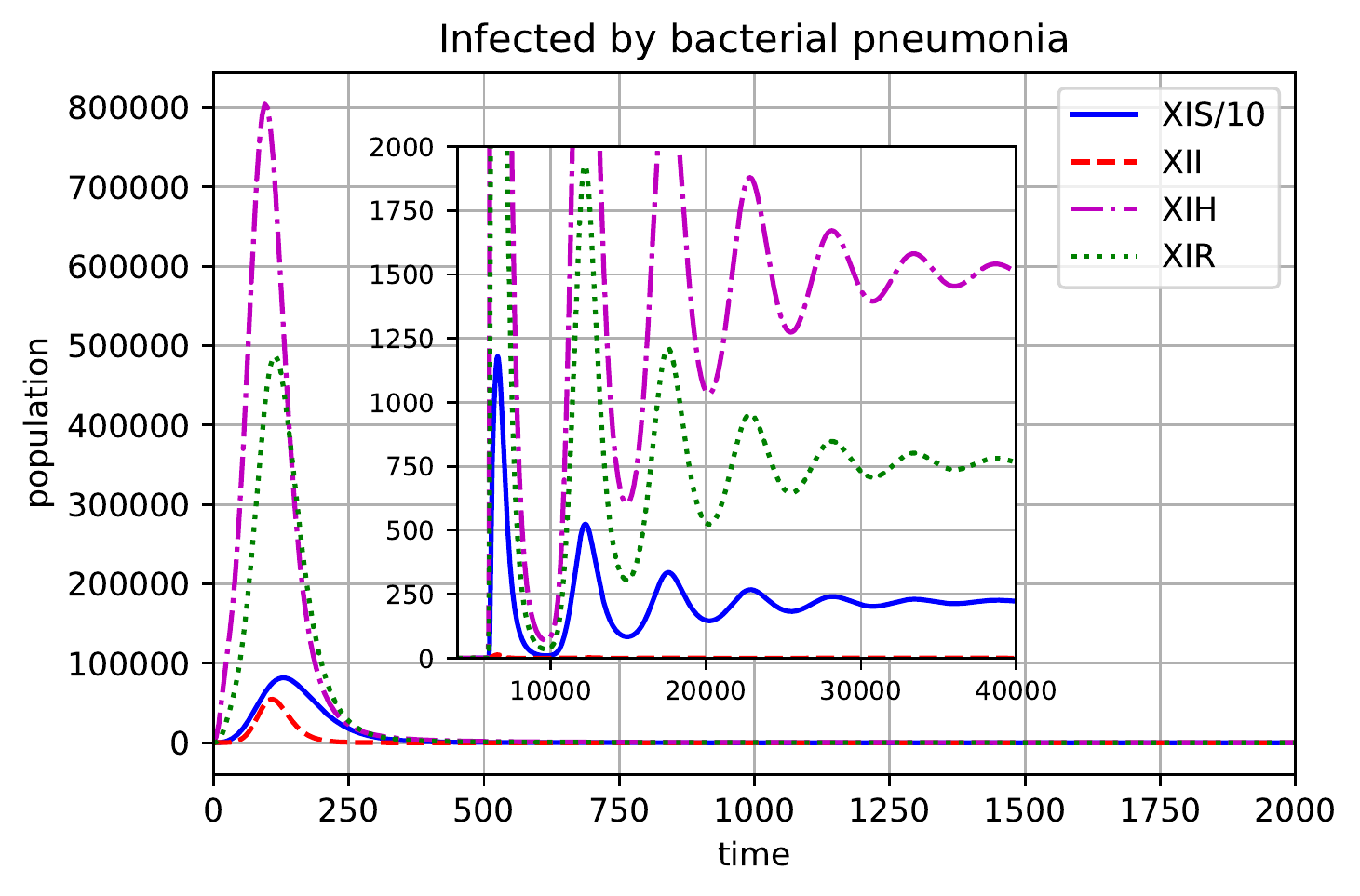}
		\vspace{4ex}
	\end{subfigure} 
	\begin{subfigure}[b]{0.5\linewidth}
		\centering
		\includegraphics[width=0.95\linewidth]{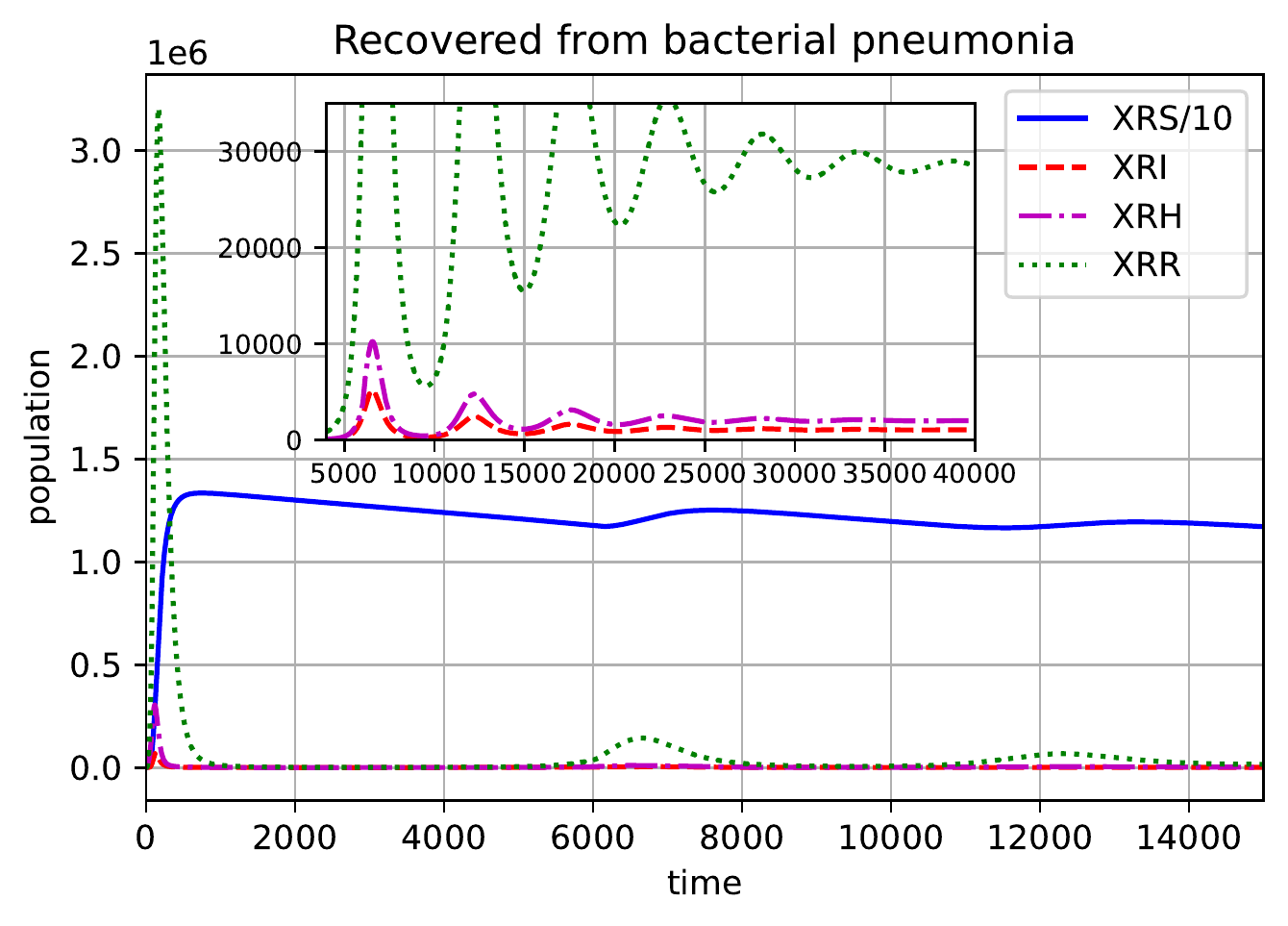}
	\end{subfigure}
	\begin{subfigure}[b]{0.5\linewidth}
		\centering
		\includegraphics[width=0.95\linewidth]{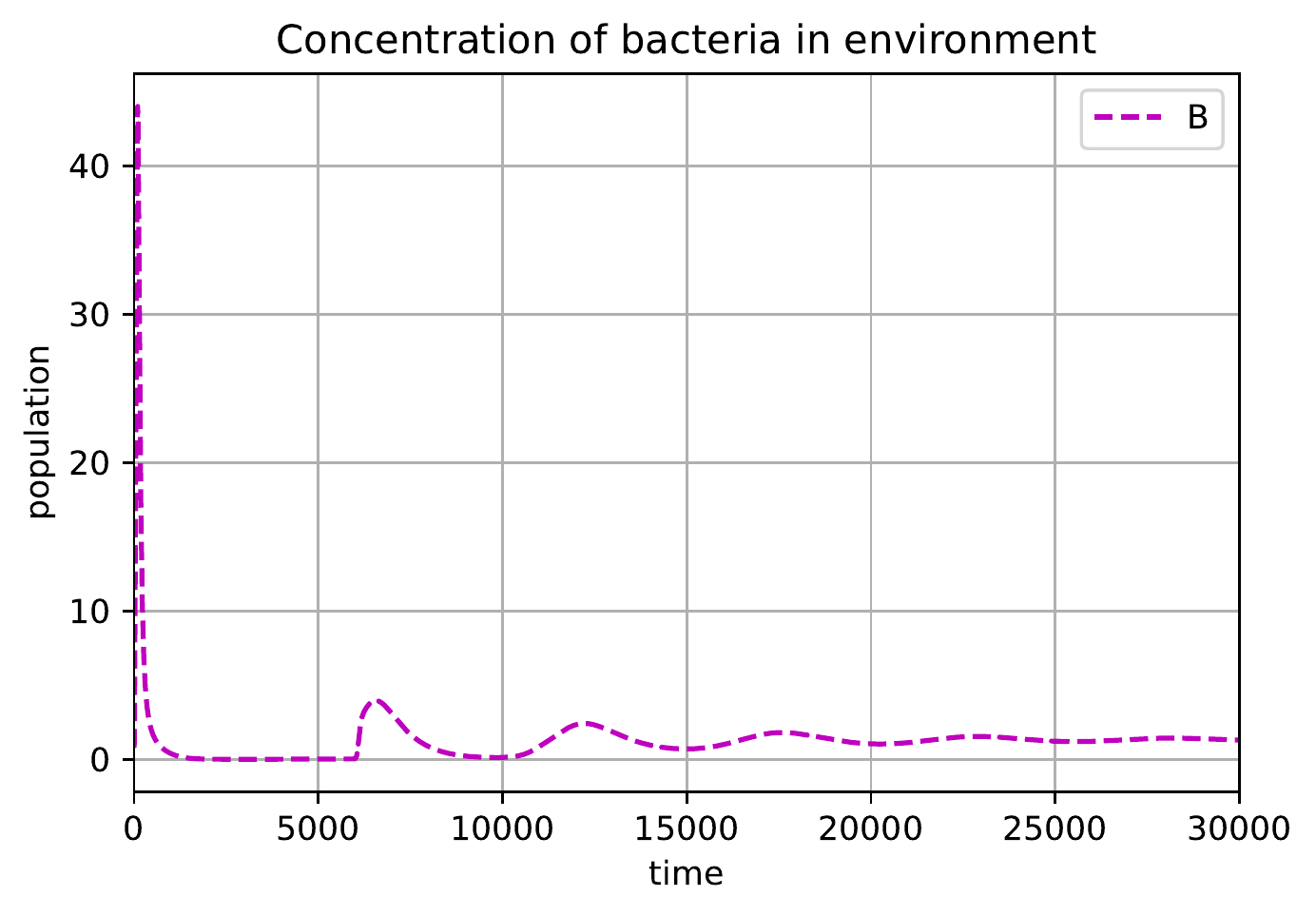}
	\end{subfigure} 
	\caption{Dynamics of the coinfection model when $\mathcal{R}_C>1$, $\mathcal{R}_P>1$ and $\mathcal{R}_B<1$.}
	\label{fig:RcRp} 
\end{figure}

\textbf{Case 3.} When $b=10^{-10}$ and $r=0.08$, we have $\mathcal{R}_P = 0.1150 < 1$ and $\mathcal{R}_B = 8 > 1$. The time plots of the solutions are depicted in Figure \ref{fig:RcRb}. We can see that they converge to the positive equilibrium
\begin{align*}
	\mathcal{E}_5 & \approx \big(6.353\times10^7,\ 6206,\ 3984,\ 8.0\times10^4,\ 189.8,\ 0.0793,\ 1519,\ 768, \\
	              & \qquad 4.251\times10^6,\ 415.3,\ 1078,\ 16605,\ 1.055\big).
\end{align*}

\begin{figure}
	\begin{subfigure}[b]{0.5\linewidth}
		\centering
		\includegraphics[width=0.95\linewidth]{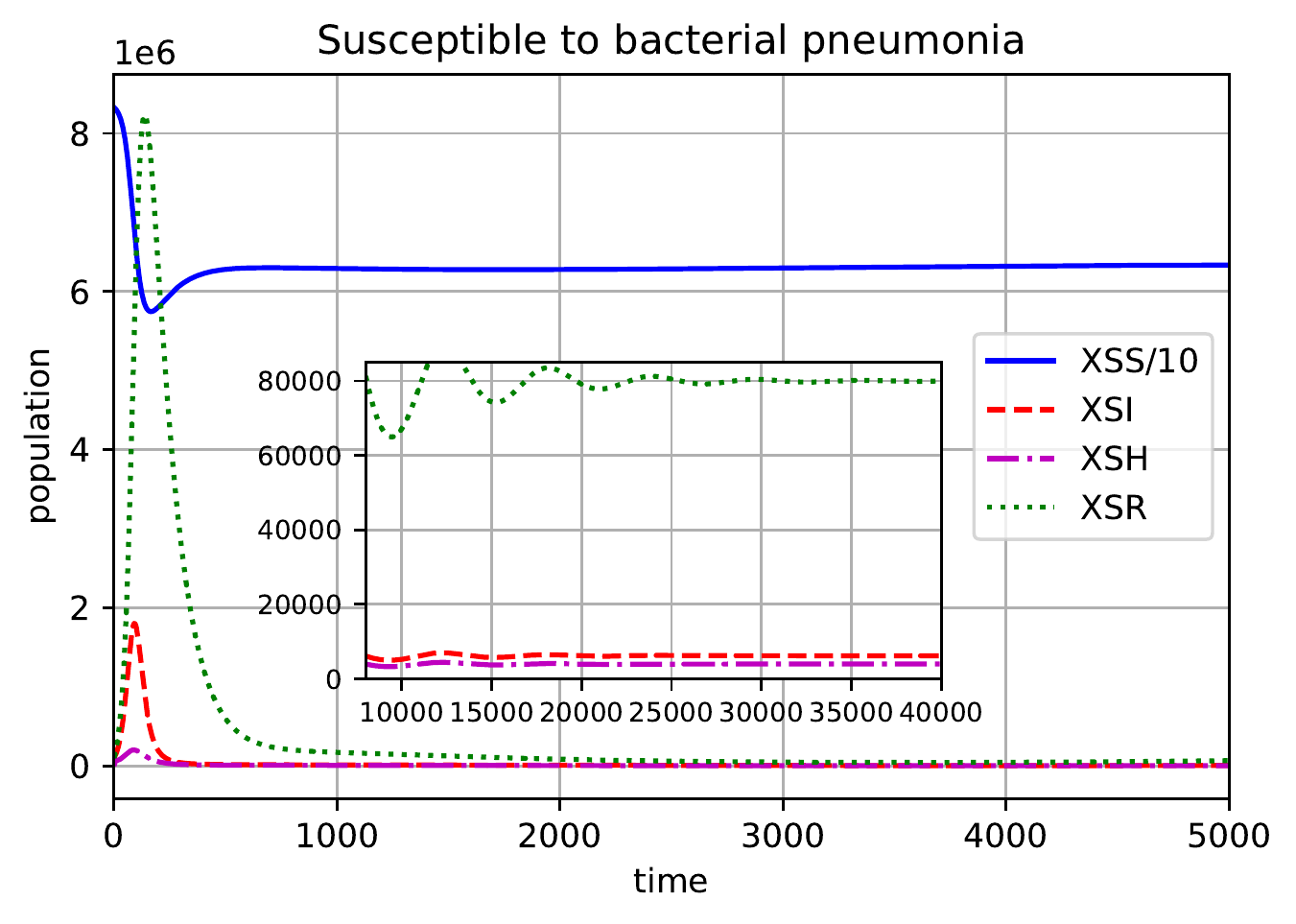}
		\vspace{4ex}
	\end{subfigure}
	\begin{subfigure}[b]{0.5\linewidth}
		\centering
		\includegraphics[width=0.95\linewidth]{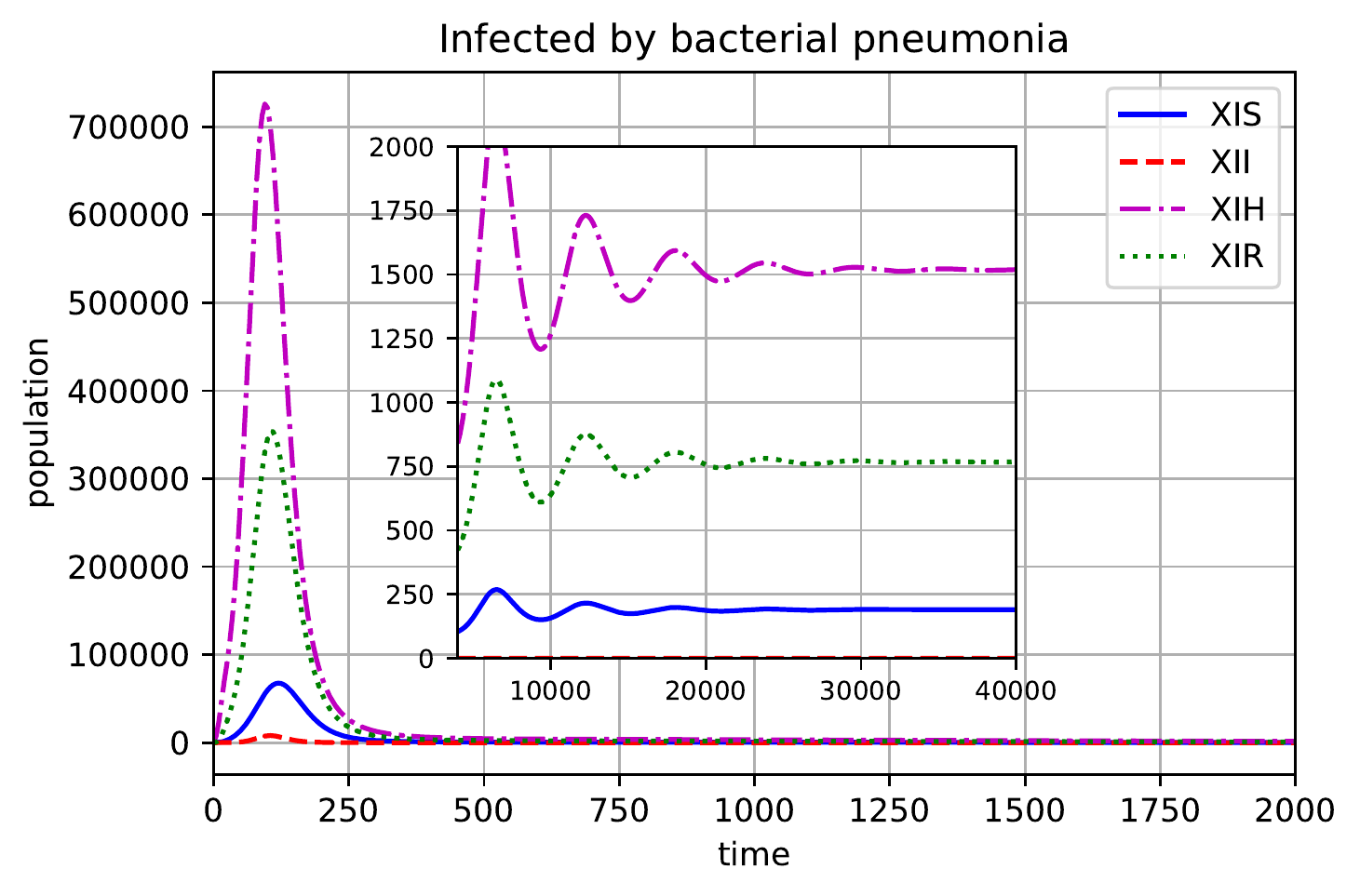}
		\vspace{4ex}
	\end{subfigure} 
	\begin{subfigure}[b]{0.5\linewidth}
		\centering
		\includegraphics[width=0.95\linewidth]{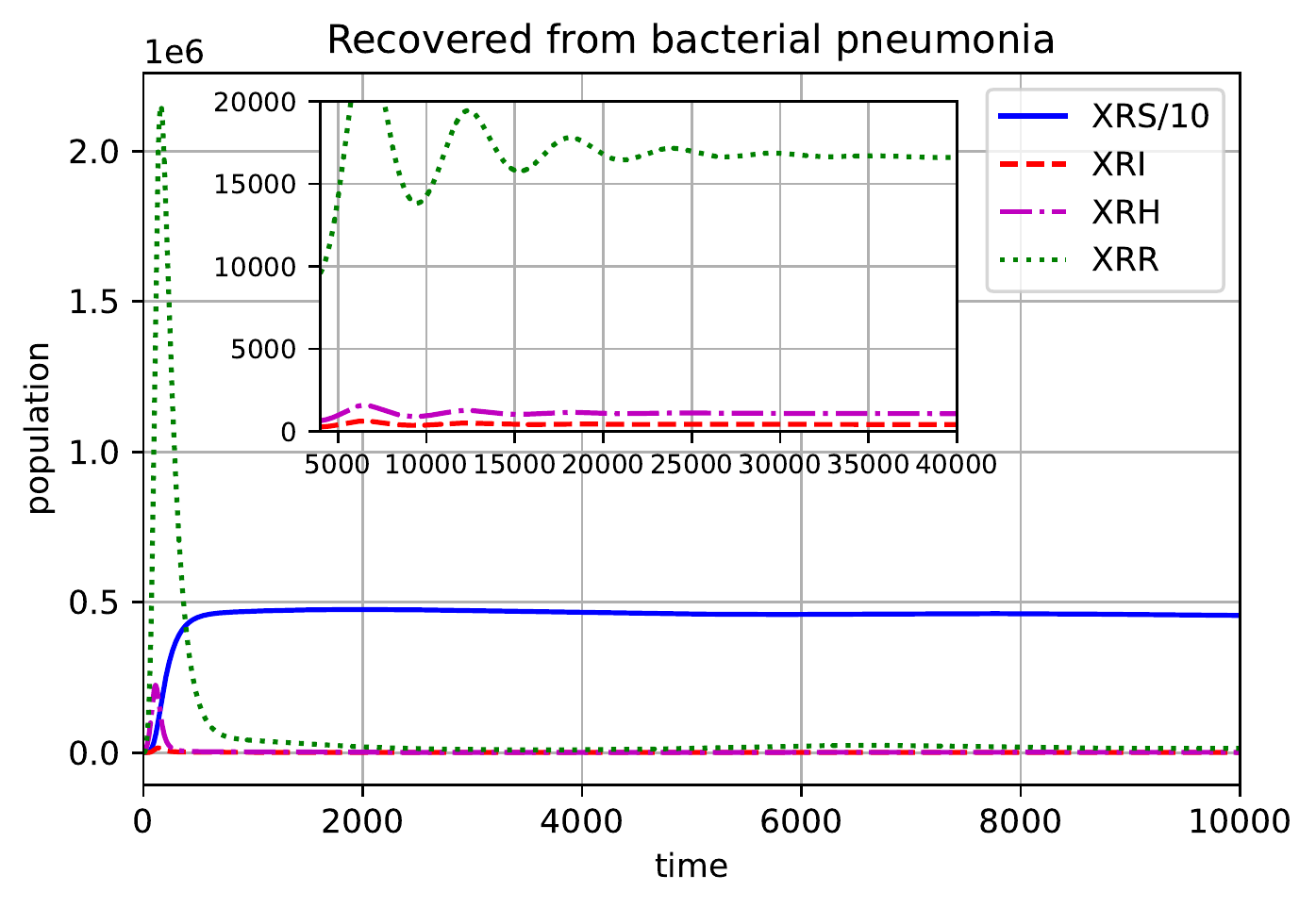}
	\end{subfigure}
	\begin{subfigure}[b]{0.5\linewidth}
		\centering
		\includegraphics[width=0.95\linewidth]{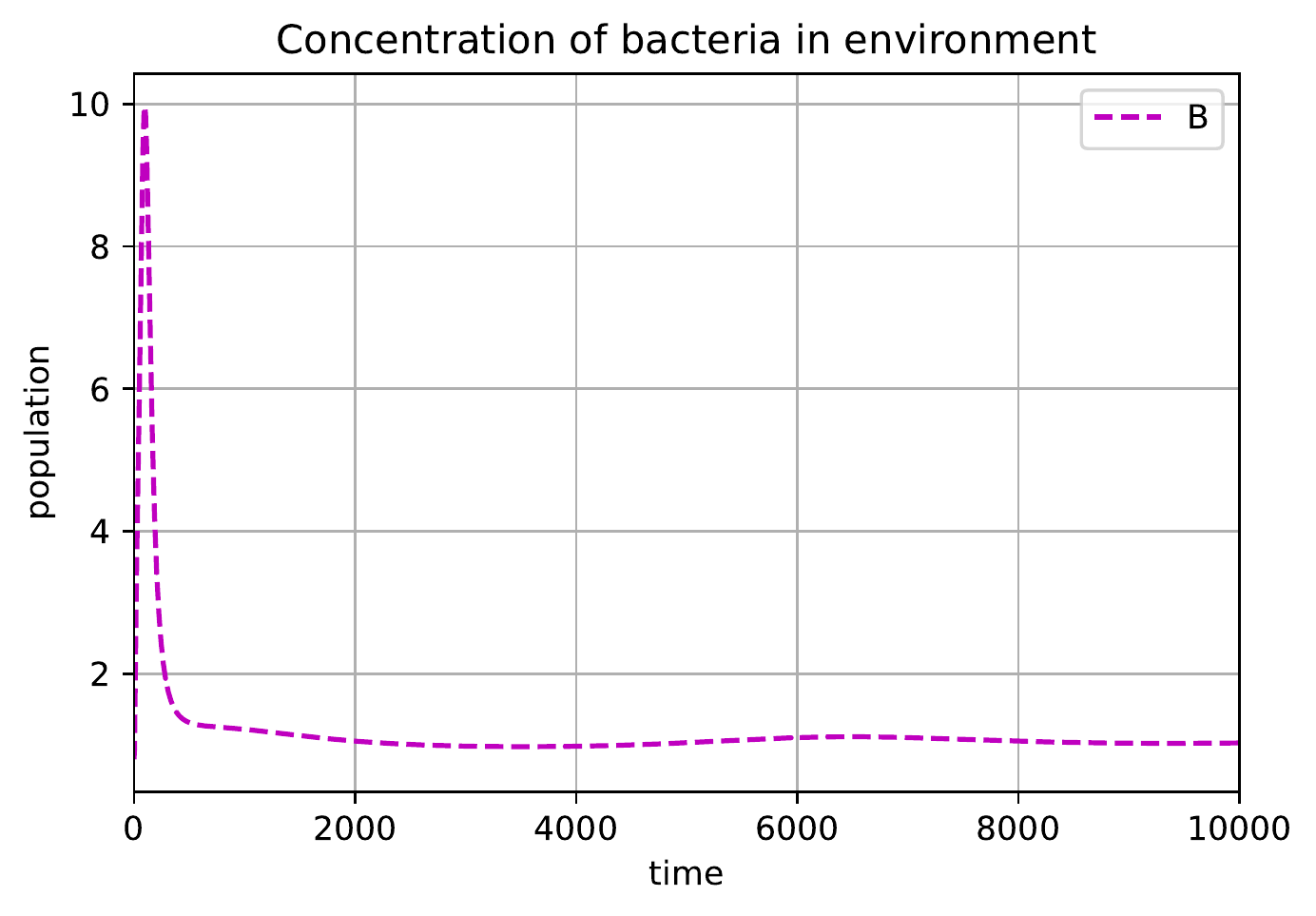}
	\end{subfigure} 
	\caption{Dynamics of the coinfection model when $\mathcal{R}_C>1$, $\mathcal{R}_P<1$ and $\mathcal{R}_B>1$.}
	\label{fig:RcRb} 
\end{figure}

\textbf{Case 4.} When $b=9\times10^{-10}$ and $r=0.08$, we have $\mathcal{R}_P = 1.0352 > 1$ and $\mathcal{R}_B = 8 > 1$. The time plots of the solutions are shown in Figure \ref{fig:RcRpRb}. We can see that the solutions converge to the positive equilibrium
\begin{align*}
	\mathcal{E}_5 & \approx \big(5.728\times10^7,\ 6007,\ 3867,\ 7.747\times10^4,\ 2186,\ 0.618,\ 1467,\ 744.4, \\
	              & \qquad 1.050\times10^7,\ 1102,\ 2074,\ 2.92\times10^4,\ 1.0497\big).
\end{align*}

\begin{figure}
	\begin{subfigure}[b]{0.5\linewidth}
		\centering
		\includegraphics[width=0.95\linewidth]{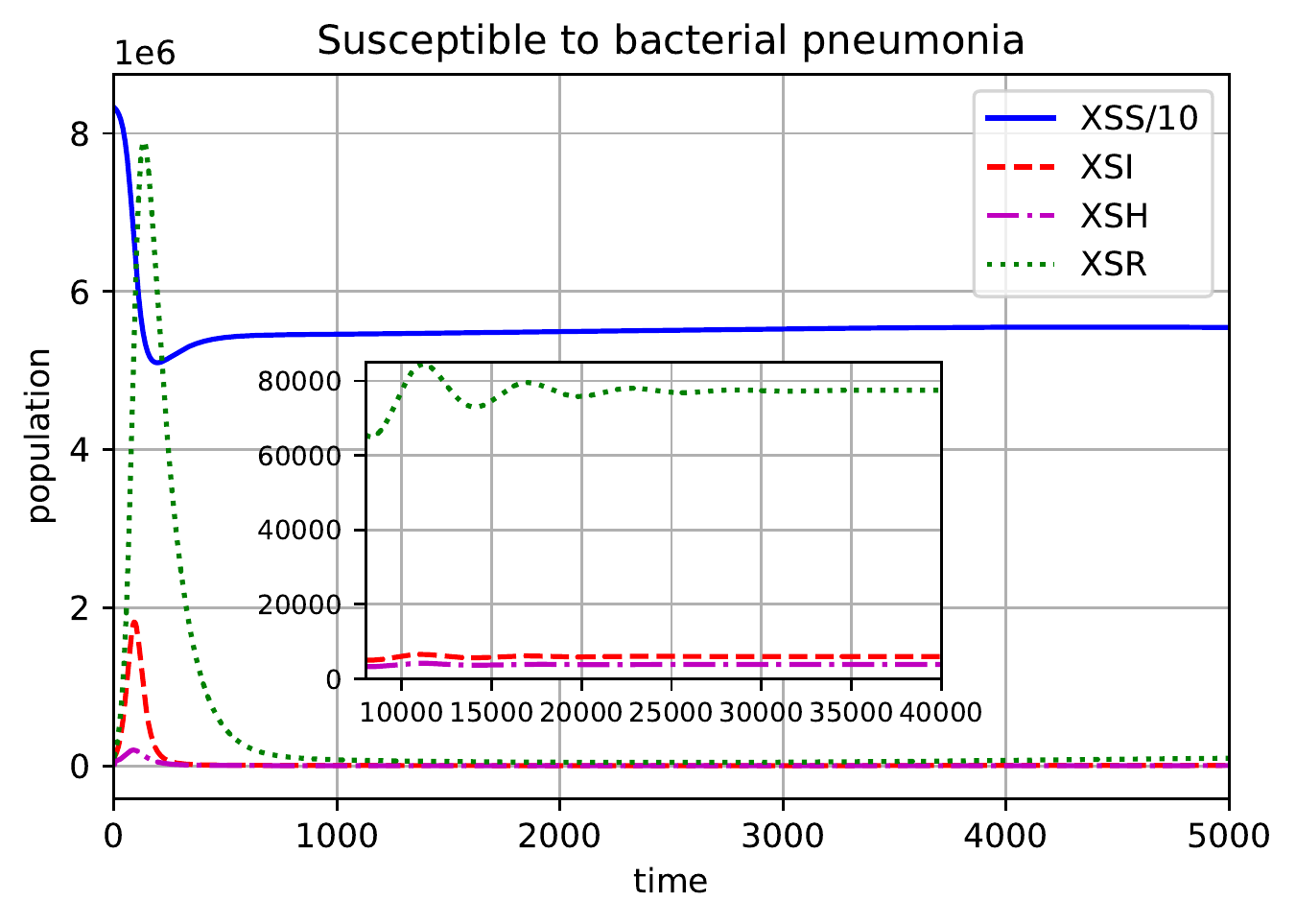}
		\vspace{4ex}
	\end{subfigure}
	\begin{subfigure}[b]{0.5\linewidth}
		\centering
		\includegraphics[width=0.95\linewidth]{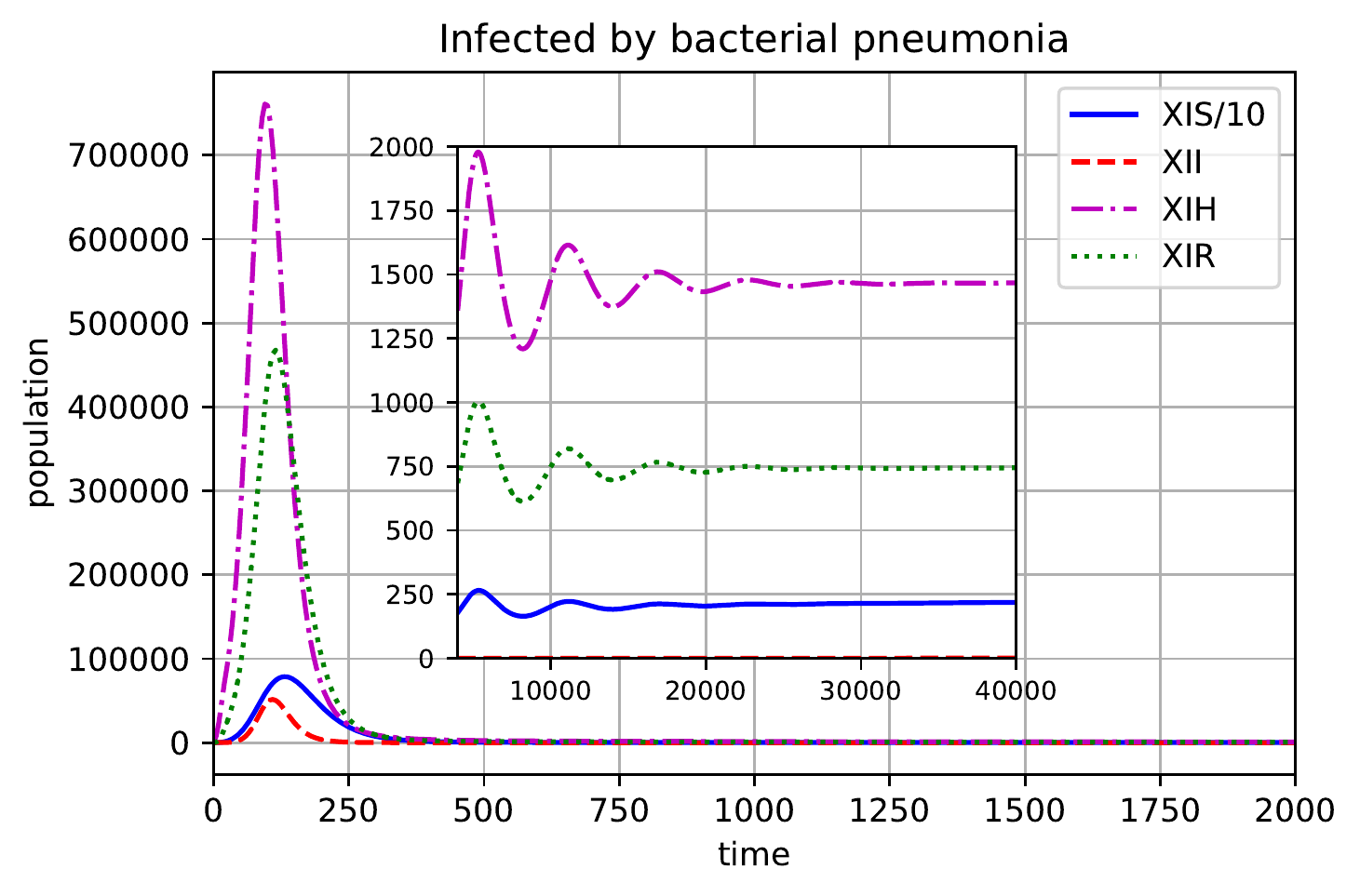}
		\vspace{4ex}
	\end{subfigure} 
	\begin{subfigure}[b]{0.5\linewidth}
		\centering
		\includegraphics[width=0.95\linewidth]{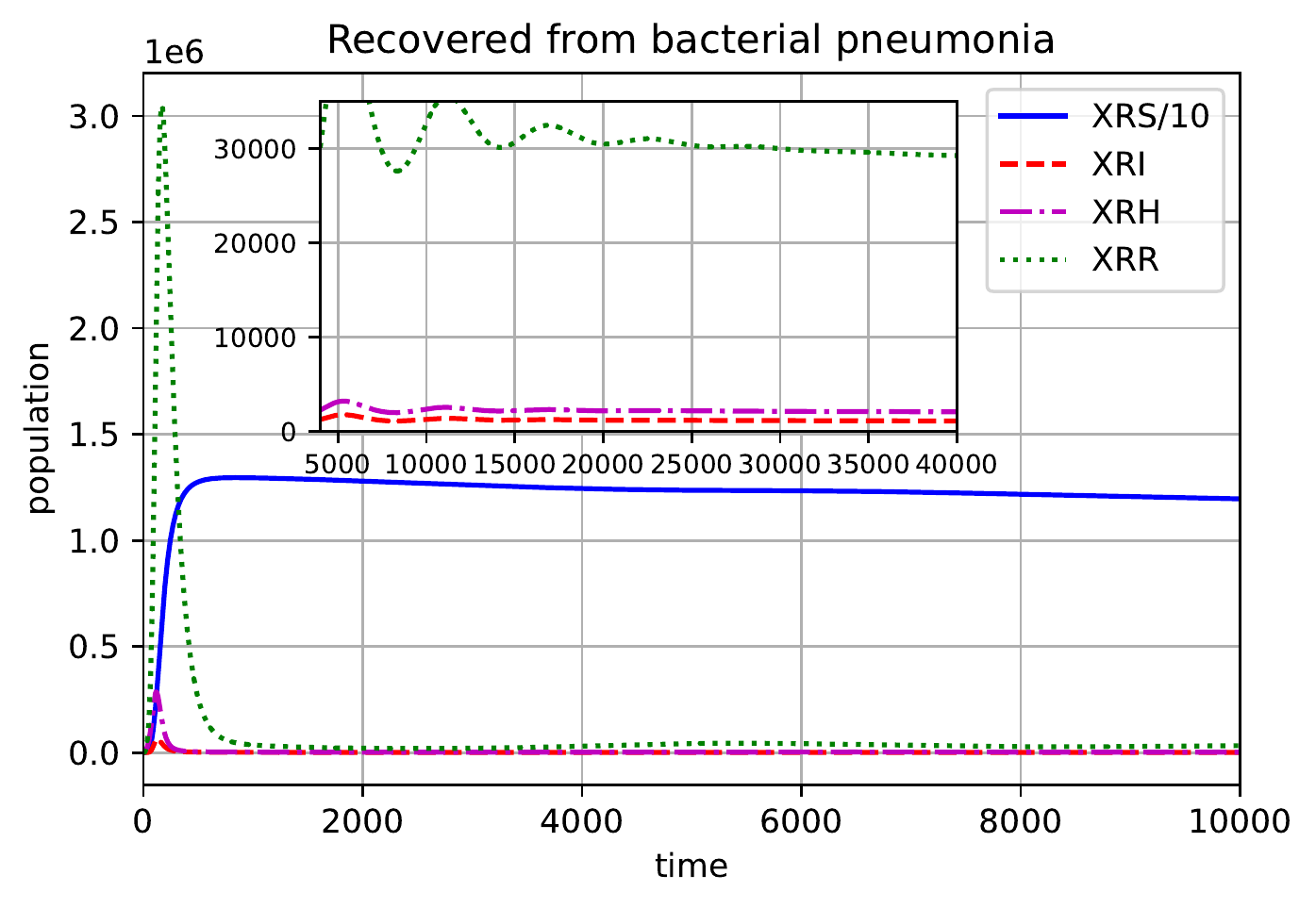}
	\end{subfigure}
	\begin{subfigure}[b]{0.5\linewidth}
		\centering
		\includegraphics[width=0.95\linewidth]{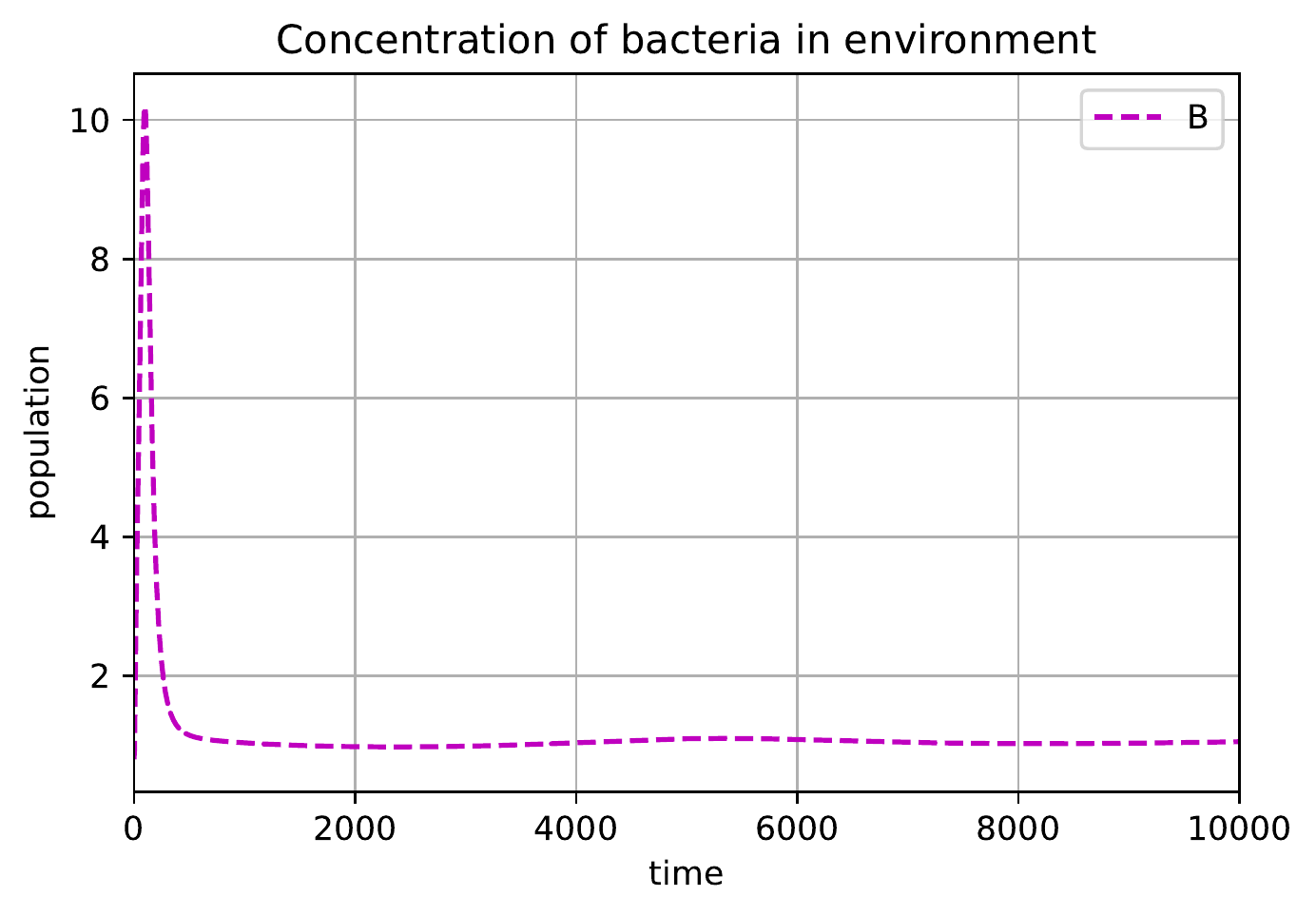}
	\end{subfigure} 
	\caption{Dynamics of the coinfection model when $\mathcal{R}_C>1$, $\mathcal{R}_P>1$ and $\mathcal{R}_B>1$.}
	\label{fig:RcRpRb} 
\end{figure}

\section{Conclusions}
\label{sec:conclusions}

We proposed a novel mathematical model to study the coinfection dynamics of COVID-19 and bacterial pneumonia. We established some basic properties of the sub-models (COVID-19 only and bacterial pneumonia only) and computed their basic reproduction numbers.

We obtained some analytical results for the coinfection model and showed that its dynamics depends on three parameters: $\mathcal{R}_C$, $\mathcal{R}_P$ and $\mathcal{R}_B$. We determined conditions for the existence of five equilibrium points. Furthermore, by means of numerical simulations, we showed that a sixth equilibrium may exist. Based on the simulations on Section \ref{sec:num}, we conjecture that the COVID-19-present, pneumonia-present, bacterial population-present equilibrium $\mathcal{E}_5$ exists and locally stable whenever $\mathcal{R}_C>1$. This implies that both diseases can coexist in the population even if reproduction numbers of bacterial pneumonia ($\mathcal{R}_P$) and bacterial population ($\mathcal{R}_B$) are reduced below unity. Hence, epidemic policies should focus on reducing the basic reproduction number of COVID-19 in order to control the pandemic.

The stability conditions for the equilibria $\mathcal{E}_0$ and $\mathcal{E}_1$ were determined in terms of the reproduction numbers. Due to the complexity of our model, we did not include a stability analysis for all equilibria. On the other hand, the coinfection model could be expanded to include vaccination or multiple COVID-19 variants. We expect to carry out a more thorough analysis in future works.

\section*{Code availability}

The code used in this paper was written in Python and can be downloaded from \url{https://github.com/agcp26/COVID19-pneumonia}.

\printbibliography[heading=bibintoc,title={References}]

\end{document}